\theoremstyle{plain}
\newtheorem{theorem}{Theorem}
\newtheorem{lemma}[theorem]{Lemma}
\newtheorem{proposition}[theorem]{Proposition}
\theoremstyle{definition}
\newtheorem{definition}[theorem]{Definition}
\theoremstyle{remark}
\newtheorem{remark}[theorem]{Remark}
\newcommand{\qbin}[2]{\begin{bmatrix}{#1}\\ {#2}\end{bmatrix}_q}
\newcommand{\qbincoeff}[3]{\begin{bmatrix}{#1}\\ {#2}\end{bmatrix}_{#3}}
\pgfplotsset{compat=1.16}
\tikzset{every node/.style={fill=white}}
\definecolor{lgray}{gray}{.7}
\tikzstyle{none}=[inner sep=0pt]
\tikzstyle{rn}=[circle,fill=Red,draw=black,line width=0.8 pt]
\tikzstyle{gn}=[circle,fill=Lime,draw=black,line width=0.8 pt]
\tikzstyle{yn}=[circle,fill=Yellow,draw=black,line width=0.8 pt]
\tikzstyle{smallCirc}=[circle,fill=White,draw=black]
\tikzstyle{bigCirc}=[circle,fill=White,draw=black]
\tikzstyle{bn}=[inner sep=0pt, circle,fill=black,draw=black,line width=0.8 pt,minimum size=4]
\tikzstyle{simple}=[-,draw=black,line width=1]
\tikzstyle{tick}=[-,
\tikzstyle{arrow}=[-,
\tikzstyle{arrowHead}=[->,draw=black]
\algnewcommand\algorithmicforeach{\textbf{for each}}
\algnewcommand{\IIf}[1]{\State\algorithmicif\ #1\ \algorithmicthen}
\algnewcommand{\EndIIf}{\unskip\ \algorithmicend\ \algorithmicif}
\algnewcommand{\IFor}[2]{\State\algorithmicforeach\ #1\ \algorithmicdo\ #2\ \algorithmicend\ \algorithmicfor}
\begin{document}

\title{Multi-qubit doilies: Enumeration for all ranks and classification for ranks four and five}

\author[1]{Axel Muller}
\author[2]{Metod Saniga}
\author[1]{Alain Giorgetti\footnote{Corresponding author, \texttt{alain.giorgetti@femto-st.fr}}}
\author[3]{Henri de Boutray}
\author[4,5,6]{\\Frédéric Holweck}

\affil[1]{FEMTO-ST institute, Univ. Bourgogne Franche-Comté, CNRS, France}
\affil[2]{Astronomical Institute of the Slovak Academy of Sciences, SK-05960
   Tatransk\'a Lomnica, Slovakia} %
\affil[3]{ColibrITD, France}
\affil[4]{Université de Technologie de Belfort-Montbéliard, F-90010 Belfort cedex, France}
\affil[5]{Laboratoire Interdisciplinaire Carnot de Bourgogne (UMR 6303 - CNRS/ICB/UTBM), France}
\affil[6]{Department of Mathematics and Statistics, Auburn University, Auburn, AL, USA}

\date{}

\maketitle

\begin{abstract}
For $N \geq 2$, an $N$-qubit doily is a doily living in the $N$-qubit symplectic
polar space. These doilies are related to operator-based proofs of
quantum contextuality. Following and extending the strategy of Saniga et al.
(Mathematics 9 (2021) 2272) that focused exclusively on three-qubit doilies, we
first bring forth several formulas giving the number of both linear and
quadratic doilies for any $N > 2$. Then we present an effective algorithm for
the generation of all $N$-qubit doilies. Using this algorithm for $N=4$ and
$N=5$, we provide a classification of $N$-qubit doilies in terms of types of
observables they feature and number of negative lines they are endowed with. We
also list several distinguished findings about $N$-qubit doilies that are absent
in the three-qubit case, point out a couple of specific features exhibited by
linear doilies and outline some prospective extensions of our approach.
\end{abstract}

\section{Introduction}
\label{sec:introduction}

The doily is a remarkable piece of finite geometry that occurs in a number of
disguises. Here, we mention the most prominent ones.

\begin{enumerate}
\item 
\textit{The doily as a duad-syntheme geometry.}
Let us recall a famous Sylvester's construction of the doily~\cite{Syl44}. Given a
six-element set $M_6 \equiv \{1,2,3,4,5,6\}$, a \emph{duad} is an unordered pair
$(ij) \in M_6$, $i \neq j$, and a \emph{syntheme} is a set of three pairwise
disjoint duads, i.\,e. a set $\{(ij),(kl),(mn)\}$ where $i,j,k,l,m,n \in M_6$
are all distinct. The point-line incidence structure whose points are duads and
whose lines are synthemes, with incidence being inclusion, is isomorphic to the
doily, as also illustrated in Figure~\ref{doily-duad-numb}.
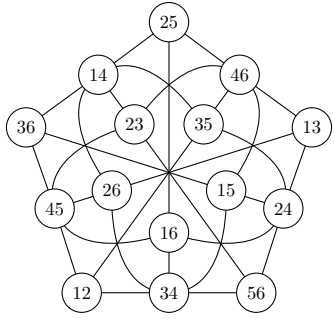
\begin{figure}[ht]
\centerline{
\begin{tikzpicture}[every plot/.style={smooth, tension=2},
  scale=2,
  every node/.style={scale=0.7,fill=white,circle,draw=black}
]
\coordinate (25) at (0,1.0);
\coordinate (34) at (0,-0.80);
\coordinate (16) at (0,-0.40);
\coordinate (36) at (-0.95,0.30);
\coordinate (24) at (0.76,-0.24);
\coordinate (15) at (0.38,-0.12);
\coordinate (12) at (-0.58,-0.80);
\coordinate (46) at (0.47,0.65);
\coordinate (35) at (0.23,0.32);
\coordinate (56) at (0.58,-0.80);
\coordinate (14) at (-0.47,0.65);
\coordinate (23) at (-0.23,0.32);
\coordinate (13) at (0.95,0.30);
\coordinate (45) at (-0.76,-0.24);
\coordinate (26) at (-0.38,-0.12);

\draw (25) -- (14) -- (36);
\draw (36) -- (45) -- (12);
\draw (12) -- (34) -- (56);
\draw (56) -- (24) -- (13);
\draw (13) -- (46) -- (25);
\draw (12) -- (35) -- (46);
\draw (14) -- (23) -- (56);
\draw (13) -- (26) -- (45);
\draw (36) -- (15) -- (24);
\draw (25) -- (16) -- (34);
\draw plot coordinates{(26) (14) (35)};
\draw plot coordinates{(23) (46) (15)};
\draw plot coordinates{(35) (24) (16)};
\draw plot coordinates{(15) (34) (26)};
\draw plot coordinates{(16) (45) (23)};

\node at (25) {$25$};
\node at (34) {$34$};
\node at (16) {$16$};
\node at (36) {$36$};
\node at (24) {$24$};
\node at (15) {$15$};
\node at (12) {$12$};
\node at (46) {$46$};
\node at (35) {$35$};
\node at (56) {$56$};
\node at (14) {$14$};
\node at (23) {$23$};
\node at (13) {$13$};
\node at (45) {$45$};
\node at (26) {$26$};
\end{tikzpicture} }
\vspace*{.2cm}
\caption{A duad-syntheme model of the doily.}
\label{doily-duad-numb}
\end{figure}

\item
\textit{The doily as the Cremona--Richmond configuration.}
It is a particular $15_3$-configuration, i.\,e. a self-dual configuration of 15
points and 15 lines, with three points on  a line and, dually, three lines
through a point such that it contains no triangles~\cite{Cre77,Ric00}. Up to
isomorphism, there are altogether 245,342 $15_3$-configurations, of which only
the doily enjoys the property of being triangle-free.

\item 
\textit{The doily as a generalized quadrangle.}
A generalized quadrangle GQ$(s,t)$ of order $(s,t)$ is an incidence structure of
points and lines (blocks) where every point is on $t+1$ lines ($t>0$), and every
line contains $s+1$ points ($s>0$) such that if $p$ is a point and $L$ is a
line, $p$ not on $L$, then there is a unique point $q$ on $L$ such that $p$ and
$q$ are collinear. The doily is isomorphic to the unique generalized quadrangle
with $s=t=2$~\cite{PT09}.

\item
\textit{The doily as a symplectic polar space.}
Given a $d$-dimensional projective space PG$(d,2)$ over the two-elements field 
$\mathbb{F}_2 = \{0,1\}$ of modulo-2 arithmetic, a \emph{polar
space} ${\cal P}$ in this projective space consists of the projective subspaces
that are \emph{totally isotropic/singular} with respect to a given non-singular
bilinear form~\cite{HT91,Cam92}; PG$(d,2)$ is called the \emph{ambient projective
space} of ${\cal P}$.
A projective subspace of maximal dimension in ${\cal P}$ is called a {\it
generator}; all generators have the same (projective) dimension $r - 1$.
One calls $r$ the \emph{rank} of the polar space.
The \emph{symplectic polar space} $\mathcal{W}(2N-1,2)$, $N \geq 1$, consists of
all the points of PG$(2N-1, 2)$, $\{(x_1, x_2, \ldots, x_{2N}): x_j \in \{0,1\},
j \in \{1,2, \ldots, 2N \}\}\backslash \{(0, 0, \ldots, 0)\}$,
 together with the totally isotropic subspaces with respect to the standard
 symplectic form
\begin{equation} 
\sigma(x,y) = x_1y_{N+1} - x_{N+1}y_1 + x_2y_{N+2} - x_{N+2}y_2 + \dots 
              + x_{N} y_{2N} - x_{2N} y_{N}.
\label{symplf}
\end{equation}
Throughout the paper, the space name $\mathcal{W}(2N-1,2)$ is often shortened as $W_N$.
This space features
\begin{equation*}
|W_N|_p = 4^N - 1
\label{ptsinwn}
\end{equation*}
points and
\begin{equation*}
|W_N|_g = (2+1)(2^2+1)\cdots(2^N+1)
\label{geninwn}
\end{equation*}
generators. The doily is isomorphic to the symplectic polar space of rank $N=2$, 
$\mathcal{W}(3,2)$.

\item
\textit{Multi-qubit doilies.}
This paper is about doilies related to Kochen--Specker operator-based proofs of
quantum contextuality, to be called $N$-qubit doilies or multi-qubit doilies.
We follow the terminology and notation of Section 2
of~\cite{SdHG21}, to which the reader can refer for more finite-geometric
background. Let
\begin{equation*}
X = \left(
\begin{array}{rr}
0 & 1 \\
1 & 0 \\
\end{array}
\right),~~
Y = \left(
\begin{array}{rr}
0 & -i \\
i & 0 \\
\end{array}
\right)~~{\rm and}~~
Z = \left(
\begin{array}{rr}
1 & 0 \\
0 & -1 \\
\end{array}
\right)
\label{paulis}
\end{equation*}
be the Pauli matrices, $I$ the identity matrix, `$\otimes$' the tensor product of
matrices and  ${\cal I}_N\equiv I_{(1)}\otimes I_{(2)}\otimes\ldots\otimes 
I_{(N)}$, and let ${\cal S}_N = \{G_1 \otimes G_2 \otimes \cdots \otimes G_N:~ 
G_j \in \{I, X, Y, Z \},~ j \in \{1, 2, \ldots, N \}\} \backslash\{{\cal I}_N\}$. 
The $4^N-1$ $N$-qubit observables of ${\cal S}_N$ can be bijectively identified
with the $4^N-1$ points of $\mathcal{W}(2N-1,2)$ in such a way that any two
commuting observables  are represented by collinear points and the product of
the three observables lying on a line of $\mathcal{W}(2N-1,2)$ is $+{\cal I}_N$
or $-{\cal I}_N$ (see, for example,~\cite[Section 5.3.2]{Hol19}). If the
symplectic form in the ambient space PG$(2N-1, 2)$, defining 
$\mathcal{W}(2N-1,2)$, is  given by Eq.~(\ref{symplf}), then the corresponding
bijection reads
\begin{equation}
G_j \leftrightarrow (x_j, x_{j+N}),~j \in \{1, 2,\dots,N\},
\label{nobspts}
\end{equation}
with the assumption that
\begin{equation}
I\leftrightarrow(0,0),~X\leftrightarrow (0,1),~Y\leftrightarrow(1,1),~{\rm and}~
Z\leftrightarrow (1,0).
\label{paulipts}
\end{equation}
To briefly illustrate this property, let us consider the three-qubit 
$\mathcal{W}(5,2)$ and one of its lines, say $(0,1,1;1,1,0)$, $(1,0,0;0,0,1)$ and 
$(1,1,1;1,1,1)$.
Using the correspondences~(\ref{nobspts}) and~(\ref{paulipts}) we find that the corresponding 
observables are $X \otimes Y \otimes Z$, $Z \otimes I \otimes X$ and  $Y \otimes
Y \otimes Y$, respectively; these observables indeed pairwise commute and their 
product is $+ I \otimes I \otimes I$. 

In what follows, $W_N$ will always be understood as having its
points labeled by the  $N$-qubit observables as described above, and any doily
lying in it, together with the inherited labeling, will be called an $N$-qubit
doily ($N \geq 2$). Slightly rephrased, an $N$\emph{-qubit doily} is a doily
whose points are bijectively identified with 15 specific observables from ${\cal
S}_N$, such that any two commuting observables share the same line, and, given any
line, the product of (any) two observables lying on it is, up to a sign, equal
to the remaining observable on it. A line of an $N$-qubit doily will be called
\emph{positive} (resp. \emph{negative}) if the product of its three observables
is $+{\cal I}_N$ (resp. $-{\cal I}_N$). To avoid any possible misunderstanding,
it is worth mentioning that the product of observables is the (ordinary) matrix
product, denoted by a dot ($.$), induced by the following multiplication table
of Pauli matrices.

\begin{center}
  \begin{tabular}{c|ccc}
    $.$  %
       & $X$  & $Y$  & $Z$  \\
    \hline
    $X$ %
       & $I$       & $iZ$ & $-iY$\\
    $Y$%
        & $-iZ$& $I$       & $iX$ \\
    $Z$%
      & $iY$ & $-iX$& $I$       \\
  \end{tabular}
\end{center}

From here on, the geometrical points are considered to be finite words on the
four-letter alphabet $\{I,X,Y,Z\}$ that encode the observables $G_1 \otimes G_2
\otimes \cdots \otimes G_N$, while omitting the symbol $\otimes$ for the tensor
product and forgetting in the sequel about the matrix nature of $I$, $X$, $Y$
and $Z$.
\end{enumerate}

\paragraph{Contributions and paper outline.} 
Our contributions start in Section~\ref{basic-sec}, with a presentation of
several facts about $N$-qubit doilies that motivates the design of an effective
algorithm to enumerate $N$-qubit doilies for any rank $N$
(Section~\ref{sec:enum-sec}). By geometric considerations, we first establish in
Section~\ref{nb-sec} closed formulas for the numbers of $N$-qubit doilies. As
these numbers increase rapidly with $N$, the enumeration algorithm can in
practice only be executed for small numbers of qubits. We use it in
Section~\ref{classify-sec} to classify $N$-qubit doilies for $N=4$ and $N=5$,
according to their types of observables and their configurations of negative
lines. We thus produce precise tables for the number of doilies in each
category/class, reproduced in the appendices of this paper.
Section~\ref{classify-sec} also analyzes these results and points out various
findings about $N$-qubit doilies that are absent in the known three-qubit case.
Section~\ref{conclusion-sec} concludes and outlines some prospective extensions
of our approach.

\section{Some basic facts about multi-qubit doilies}
\label{basic-sec}

\subsection{Patterns formed by negative lines}

It is a straightforward task to work out possible types of patterns of negative
lines an $N$-qubit doily can be endowed with. This classification follows
readily from the facts that each grid in the doily  must contain an odd number
of negative lines and that two different grids have two intersecting lines in
common. And as a grid has an even number of lines the types of configurations
come in complementary pairs, as depicted in Figure~\ref{doily-nl}.

\begin{figure}[!ht]
\centering
\begin{minipage}{0.4\textwidth}
\begin{subfigure}{0.4\textwidth}
\begin{tikzpicture}[every plot/.style={smooth, tension=2},
  scale=1,
  every node/.style={scale=0.7,fill=white,circle,draw=black}
]
\node[scale=1.3,draw=white] at (-.75,1) {3};

\coordinate (25) at (0,1.0);
\coordinate (34) at (0,-0.80);
\coordinate (16) at (0,-0.40);
\coordinate (36) at (-0.95,0.30);
\coordinate (24) at (0.76,-0.24);
\coordinate (15) at (0.38,-0.12);
\coordinate (12) at (-0.58,-0.80);
\coordinate (46) at (0.47,0.65);
\coordinate (35) at (0.23,0.32);
\coordinate (56) at (0.58,-0.80);
\coordinate (14) at (-0.47,0.65);
\coordinate (23) at (-0.23,0.32);
\coordinate (13) at (0.95,0.30);
\coordinate (45) at (-0.76,-0.24);
\coordinate (26) at (-0.38,-0.12);

\draw (25) -- (14) -- (36);
\draw[very thick] (36) -- (45) -- (12);
\draw (12) -- (34) -- (56);
\draw[very thick] (56) -- (24) -- (13);
\draw (13) -- (46) -- (25);
\draw (12) -- (35) -- (46);
\draw (14) -- (23) -- (56);
\draw (13) -- (26) -- (45);
\draw (36) -- (15) -- (24);
\draw (25) -- (16) -- (34);
\draw plot coordinates{(26) (14) (35)};
\draw plot coordinates{(23) (46) (15)};
\draw plot coordinates{(35) (24) (16)};
\draw[very thick] plot coordinates{(15) (34) (26)};
\draw plot coordinates{(16) (45) (23)};

\node at (25) {};
\node at (34) {};
\node at (16) {};
\node at (36) {};
\node at (24) {};
\node at (15) {};
\node at (12) {};
\node at (46) {};
\node at (35) {};
\node at (56) {};
\node at (14) {};
\node at (23) {};
\node at (13) {};
\node at (45) {};
\node at (26) {};
\end{tikzpicture}
\end{subfigure}
\begin{subfigure}{0.4\textwidth}
\begin{tikzpicture}[every plot/.style={smooth, tension=2},
  scale=1,
  every node/.style={scale=0.7,fill=white,circle,draw=black}
]
\node[scale=1.3,draw=white] at (-.75,1) {12};

\coordinate (25) at (0,1.0);
\coordinate (34) at (0,-0.80);
\coordinate (16) at (0,-0.40);
\coordinate (36) at (-0.95,0.30);
\coordinate (24) at (0.76,-0.24);
\coordinate (15) at (0.38,-0.12);
\coordinate (12) at (-0.58,-0.80);
\coordinate (46) at (0.47,0.65);
\coordinate (35) at (0.23,0.32);
\coordinate (56) at (0.58,-0.80);
\coordinate (14) at (-0.47,0.65);
\coordinate (23) at (-0.23,0.32);
\coordinate (13) at (0.95,0.30);
\coordinate (45) at (-0.76,-0.24);
\coordinate (26) at (-0.38,-0.12);

\draw[very thick] (25) -- (14) -- (36);
\draw (36) -- (45) -- (12);
\draw[very thick] (12) -- (34) -- (56);
\draw (56) -- (24) -- (13);
\draw[very thick] (13) -- (46) -- (25);
\draw[very thick] (12) -- (35) -- (46);
\draw[very thick] (14) -- (23) -- (56);
\draw[very thick] (13) -- (26) -- (45);
\draw[very thick] (36) -- (15) -- (24);
\draw[very thick] (25) -- (16) -- (34);
\draw[very thick] plot coordinates{(26) (14) (35)};
\draw[very thick] plot coordinates{(23) (46) (15)};
\draw[very thick] plot coordinates{(35) (24) (16)};
\draw plot coordinates{(15) (34) (26)};
\draw[very thick] plot coordinates{(16) (45) (23)};

\node[very thick] at (25) {};
\node at (34) {};
\node[very thick] at (16) {};
\node at (36) {};
\node at (24) {};
\node at (15) {};
\node at (12) {};
\node[very thick] at (46) {};
\node[very thick] at (35) {};
\node at (56) {};
\node[very thick] at (14) {};
\node[very thick] at (23) {};
\node at (13) {};
\node at (45) {};
\node at (26) {};
\end{tikzpicture}
\end{subfigure}\\
\begin{subfigure}{0.4\textwidth}
\begin{tikzpicture}[every plot/.style={smooth, tension=2},
  scale=1,
  every node/.style={scale=0.7,fill=white,circle,draw=black}
]
\node[scale=1.3,draw=white] at (-.75,1) {4};

\coordinate (25) at (0,1.0);
\coordinate (34) at (0,-0.80);
\coordinate (16) at (0,-0.40);
\coordinate (36) at (-0.95,0.30);
\coordinate (24) at (0.76,-0.24);
\coordinate (15) at (0.38,-0.12);
\coordinate (12) at (-0.58,-0.80);
\coordinate (46) at (0.47,0.65);
\coordinate (35) at (0.23,0.32);
\coordinate (56) at (0.58,-0.80);
\coordinate (14) at (-0.47,0.65);
\coordinate (23) at (-0.23,0.32);
\coordinate (13) at (0.95,0.30);
\coordinate (45) at (-0.76,-0.24);
\coordinate (26) at (-0.38,-0.12);

\draw (25) -- (14) -- (36);
\draw[very thick] (36) -- (45) -- (12);
\draw[very thick] (12) -- (34) -- (56);
\draw[very thick] (56) -- (24) -- (13);
\draw (13) -- (46) -- (25);
\draw (12) -- (35) -- (46);
\draw (14) -- (23) -- (56);
\draw (13) -- (26) -- (45);
\draw (36) -- (15) -- (24);
\draw[very thick] (25) -- (16) -- (34);
\draw plot coordinates{(26) (14) (35)};
\draw plot coordinates{(23) (46) (15)};
\draw plot coordinates{(35) (24) (16)};
\draw plot coordinates{(15) (34) (26)};
\draw plot coordinates{(16) (45) (23)};

\node at (25) {};
\node at (34) {};
\node at (16) {};
\node at (36) {};
\node at (24) {};
\node at (15) {};
\node at (12) {};
\node at (46) {};
\node at (35) {};
\node at (56) {};
\node at (14) {};
\node at (23) {};
\node at (13) {};
\node at (45) {};
\node at (26) {};
\end{tikzpicture}
\end{subfigure}
\begin{subfigure}{0.4\textwidth}
\begin{tikzpicture}[every plot/.style={smooth, tension=2},
  scale=1,
  every node/.style={scale=0.7,fill=white,circle,draw=black}
]
\node[scale=1.3,draw=white] at (-.75,1) {11};

\coordinate (25) at (0,1.0);
\coordinate (34) at (0,-0.80);
\coordinate (16) at (0,-0.40);
\coordinate (36) at (-0.95,0.30);
\coordinate (24) at (0.76,-0.24);
\coordinate (15) at (0.38,-0.12);
\coordinate (12) at (-0.58,-0.80);
\coordinate (46) at (0.47,0.65);
\coordinate (35) at (0.23,0.32);
\coordinate (56) at (0.58,-0.80);
\coordinate (14) at (-0.47,0.65);
\coordinate (23) at (-0.23,0.32);
\coordinate (13) at (0.95,0.30);
\coordinate (45) at (-0.76,-0.24);
\coordinate (26) at (-0.38,-0.12);

\draw[very thick] (25) -- (14) -- (36);
\draw (36) -- (45) -- (12);
\draw (12) -- (34) -- (56);
\draw (56) -- (24) -- (13);
\draw[very thick] (13) -- (46) -- (25);
\draw[very thick] (12) -- (35) -- (46);
\draw[very thick] (14) -- (23) -- (56);
\draw[very thick] (13) -- (26) -- (45);
\draw[very thick] (36) -- (15) -- (24);
\draw (25) -- (16) -- (34);
\draw[very thick] plot coordinates{(26) (14) (35)};
\draw[very thick] plot coordinates{(23) (46) (15)};
\draw[very thick] plot coordinates{(35) (24) (16)};
\draw[very thick] plot coordinates{(15) (34) (26)};
\draw[very thick] plot coordinates{(16) (45) (23)};

\node at (25) {};
\node at (34) {};
\node at (16) {};
\node at (36) {};
\node at (24) {};
\node[very thick] at (15) {};
\node at (12) {};
\node[very thick] at (46) {};
\node[very thick] at (35) {};
\node at (56) {};
\node[very thick] at (14) {};
\node[very thick] at (23) {};
\node at (13) {};
\node at (45) {};
\node[very thick] at (26) {};
\end{tikzpicture}
\end{subfigure}\\
\begin{subfigure}{0.4\textwidth}
\begin{tikzpicture}[every plot/.style={smooth, tension=2},
  scale=1,
  every node/.style={scale=0.7,fill=white,circle,draw=black}
]
\node[scale=1.3,draw=white] at (-.75,1) {5};

\coordinate (25) at (0,1.0);
\coordinate (34) at (0,-0.80);
\coordinate (16) at (0,-0.40);
\coordinate (36) at (-0.95,0.30);
\coordinate (24) at (0.76,-0.24);
\coordinate (15) at (0.38,-0.12);
\coordinate (12) at (-0.58,-0.80);
\coordinate (46) at (0.47,0.65);
\coordinate (35) at (0.23,0.32);
\coordinate (56) at (0.58,-0.80);
\coordinate (14) at (-0.47,0.65);
\coordinate (23) at (-0.23,0.32);
\coordinate (13) at (0.95,0.30);
\coordinate (45) at (-0.76,-0.24);
\coordinate (26) at (-0.38,-0.12);

\draw[very thick] (25) -- (14) -- (36);
\draw[very thick] (36) -- (45) -- (12);
\draw[very thick] (12) -- (34) -- (56);
\draw[very thick] (56) -- (24) -- (13);
\draw[very thick] (13) -- (46) -- (25);
\draw (12) -- (35) -- (46);
\draw (14) -- (23) -- (56);
\draw (13) -- (26) -- (45);
\draw (36) -- (15) -- (24);
\draw (25) -- (16) -- (34);
\draw plot coordinates{(26) (14) (35)};
\draw plot coordinates{(23) (46) (15)};
\draw plot coordinates{(35) (24) (16)};
\draw plot coordinates{(15) (34) (26)};
\draw plot coordinates{(16) (45) (23)};

\node at (25) {};
\node at (34) {};
\node at (16) {};
\node at (36) {};
\node at (24) {};
\node at (15) {};
\node at (12) {};
\node at (46) {};
\node at (35) {};
\node at (56) {};
\node at (14) {};
\node at (23) {};
\node at (13) {};
\node at (45) {};
\node at (26) {};
\end{tikzpicture}
\end{subfigure}
\begin{subfigure}{0.4\textwidth}
\begin{tikzpicture}[every plot/.style={smooth, tension=2},
  scale=1,
  every node/.style={scale=0.7,fill=white,circle,draw=black}
]
\node[scale=1.3,draw=white] at (-.75,1) {10};

\coordinate (25) at (0,1.0);
\coordinate (34) at (0,-0.80);
\coordinate (16) at (0,-0.40);
\coordinate (36) at (-0.95,0.30);
\coordinate (24) at (0.76,-0.24);
\coordinate (15) at (0.38,-0.12);
\coordinate (12) at (-0.58,-0.80);
\coordinate (46) at (0.47,0.65);
\coordinate (35) at (0.23,0.32);
\coordinate (56) at (0.58,-0.80);
\coordinate (14) at (-0.47,0.65);
\coordinate (23) at (-0.23,0.32);
\coordinate (13) at (0.95,0.30);
\coordinate (45) at (-0.76,-0.24);
\coordinate (26) at (-0.38,-0.12);

\draw (25) -- (14) -- (36);
\draw (36) -- (45) -- (12);
\draw (12) -- (34) -- (56);
\draw (56) -- (24) -- (13);
\draw (13) -- (46) -- (25);
\draw[very thick] (12) -- (35) -- (46);
\draw[very thick] (14) -- (23) -- (56);
\draw[very thick] (13) -- (26) -- (45);
\draw[very thick] (36) -- (15) -- (24);
\draw[very thick] (25) -- (16) -- (34);
\draw[very thick] plot coordinates{(26) (14) (35)};
\draw[very thick] plot coordinates{(23) (46) (15)};
\draw[very thick] plot coordinates{(35) (24) (16)};
\draw[very thick] plot coordinates{(15) (34) (26)};
\draw[very thick] plot coordinates{(16) (45) (23)};

\node at (25) {};
\node at (34) {};
\node[very thick] at (16) {};
\node at (36) {};
\node at (24) {};
\node[very thick] at (15) {};
\node at (12) {};
\node at (46) {};
\node[very thick] at (35) {};
\node at (56) {};
\node at (14) {};
\node[very thick] at (23) {};
\node at (13) {};
\node at (45) {};
\node[very thick] at (26) {};
\end{tikzpicture}
\end{subfigure}
\end{minipage} \hspace*{1cm} 
\begin{minipage}{0.4\textwidth}
\begin{subfigure}{0.4\textwidth}
\begin{tikzpicture}[every plot/.style={smooth, tension=2},
  scale=1,
  every node/.style={scale=0.7,fill=white,circle,draw=black}
]
\node[scale=1.3,draw=white] at (-.75,1) {6};

\coordinate (25) at (0,1.0);
\coordinate (34) at (0,-0.80);
\coordinate (16) at (0,-0.40);
\coordinate (36) at (-0.95,0.30);
\coordinate (24) at (0.76,-0.24);
\coordinate (15) at (0.38,-0.12);
\coordinate (12) at (-0.58,-0.80);
\coordinate (46) at (0.47,0.65);
\coordinate (35) at (0.23,0.32);
\coordinate (56) at (0.58,-0.80);
\coordinate (14) at (-0.47,0.65);
\coordinate (23) at (-0.23,0.32);
\coordinate (13) at (0.95,0.30);
\coordinate (45) at (-0.76,-0.24);
\coordinate (26) at (-0.38,-0.12);

\draw[very thick] (25) -- (14) -- (36);
\draw[very thick] (36) -- (45) -- (12);
\draw (12) -- (34) -- (56);
\draw[very thick] (56) -- (24) -- (13);
\draw[very thick] (13) -- (46) -- (25);
\draw (12) -- (35) -- (46);
\draw (14) -- (23) -- (56);
\draw (13) -- (26) -- (45);
\draw (36) -- (15) -- (24);
\draw[very thick] (25) -- (16) -- (34);
\draw plot coordinates{(26) (14) (35)};
\draw plot coordinates{(23) (46) (15)};
\draw plot coordinates{(35) (24) (16)};
\draw[very thick] plot coordinates{(15) (34) (26)};
\draw plot coordinates{(16) (45) (23)};

\node[very thick] at (25) {};
\node at (34) {};
\node at (16) {};
\node at (36) {};
\node at (24) {};
\node at (15) {};
\node at (12) {};
\node at (46) {};
\node at (35) {};
\node at (56) {};
\node at (14) {};
\node at (23) {};
\node at (13) {};
\node at (45) {};
\node at (26) {};
\end{tikzpicture}
\end{subfigure}
\begin{subfigure}{0.4\textwidth}
\begin{tikzpicture}[every plot/.style={smooth, tension=2},
  scale=1,
  every node/.style={scale=0.7,fill=white,circle,draw=black}
]
\node[scale=1.3,draw=white] at (-.75,1) {9};

\coordinate (25) at (0,1.0);
\coordinate (34) at (0,-0.80);
\coordinate (16) at (0,-0.40);
\coordinate (36) at (-0.95,0.30);
\coordinate (24) at (0.76,-0.24);
\coordinate (15) at (0.38,-0.12);
\coordinate (12) at (-0.58,-0.80);
\coordinate (46) at (0.47,0.65);
\coordinate (35) at (0.23,0.32);
\coordinate (56) at (0.58,-0.80);
\coordinate (14) at (-0.47,0.65);
\coordinate (23) at (-0.23,0.32);
\coordinate (13) at (0.95,0.30);
\coordinate (45) at (-0.76,-0.24);
\coordinate (26) at (-0.38,-0.12);

\draw (25) -- (14) -- (36);
\draw (36) -- (45) -- (12);
\draw[very thick] (12) -- (34) -- (56);
\draw (56) -- (24) -- (13);
\draw (13) -- (46) -- (25);
\draw[very thick] (12) -- (35) -- (46);
\draw[very thick] (14) -- (23) -- (56);
\draw[very thick] (13) -- (26) -- (45);
\draw[very thick] (36) -- (15) -- (24);
\draw (25) -- (16) -- (34);
\draw[very thick] plot coordinates{(26) (14) (35)};
\draw[very thick] plot coordinates{(23) (46) (15)};
\draw[very thick] plot coordinates{(35) (24) (16)};
\draw plot coordinates{(15) (34) (26)};
\draw[very thick] plot coordinates{(16) (45) (23)};

\node at (25) {};
\node at (34) {};
\node at (16) {};
\node at (36) {};
\node at (24) {};
\node at (15) {};
\node at (12) {};
\node at (46) {};
\node[very thick] at (35) {};
\node at (56) {};
\node at (14) {};
\node[very thick] at (23) {};
\node at (13) {};
\node at (45) {};
\node at (26) {};
\end{tikzpicture}
\end{subfigure}\\
\begin{subfigure}{0.4\textwidth}
\begin{tikzpicture}[every plot/.style={smooth, tension=2},
  scale=1,
  every node/.style={scale=0.7,fill=white,circle,draw=black}
]
\node[scale=1.3,draw=white] at (-.75,1) {7A};

\coordinate (25) at (0,1.0);
\coordinate (34) at (0,-0.80);
\coordinate (16) at (0,-0.40);
\coordinate (36) at (-0.95,0.30);
\coordinate (24) at (0.76,-0.24);
\coordinate (15) at (0.38,-0.12);
\coordinate (12) at (-0.58,-0.80);
\coordinate (46) at (0.47,0.65);
\coordinate (35) at (0.23,0.32);
\coordinate (56) at (0.58,-0.80);
\coordinate (14) at (-0.47,0.65);
\coordinate (23) at (-0.23,0.32);
\coordinate (13) at (0.95,0.30);
\coordinate (45) at (-0.76,-0.24);
\coordinate (26) at (-0.38,-0.12);

\draw[very thick] (25) -- (14) -- (36);
\draw (36) -- (45) -- (12);
\draw[very thick] (12) -- (34) -- (56);
\draw (56) -- (24) -- (13);
\draw[very thick] (13) -- (46) -- (25);
\draw (12) -- (35) -- (46);
\draw (14) -- (23) -- (56);
\draw (13) -- (26) -- (45);
\draw (36) -- (15) -- (24);
\draw (25) -- (16) -- (34);
\draw[very thick] plot coordinates{(26) (14) (35)};
\draw[very thick] plot coordinates{(23) (46) (15)};
\draw[very thick] plot coordinates{(35) (24) (16)};
\draw plot coordinates{(15) (34) (26)};
\draw[very thick] plot coordinates{(16) (45) (23)};

\node at (25) {};
\node at (34) {};
\node at (16) {};
\node at (36) {};
\node at (24) {};
\node at (15) {};
\node at (12) {};
\node at (46) {};
\node at (35) {};
\node at (56) {};
\node at (14) {};
\node at (23) {};
\node at (13) {};
\node at (45) {};
\node at (26) {};
\end{tikzpicture}
\end{subfigure}
\begin{subfigure}{0.4\textwidth}
\begin{tikzpicture}[every plot/.style={smooth, tension=2},
  scale=1,
  every node/.style={scale=0.7,fill=white,circle,draw=black}
]
\node[scale=1.3,draw=white] at (-.75,1) {8A};

\coordinate (25) at (0,1.0);
\coordinate (34) at (0,-0.80);
\coordinate (16) at (0,-0.40);
\coordinate (36) at (-0.95,0.30);
\coordinate (24) at (0.76,-0.24);
\coordinate (15) at (0.38,-0.12);
\coordinate (12) at (-0.58,-0.80);
\coordinate (46) at (0.47,0.65);
\coordinate (35) at (0.23,0.32);
\coordinate (56) at (0.58,-0.80);
\coordinate (14) at (-0.47,0.65);
\coordinate (23) at (-0.23,0.32);
\coordinate (13) at (0.95,0.30);
\coordinate (45) at (-0.76,-0.24);
\coordinate (26) at (-0.38,-0.12);

\draw (25) -- (14) -- (36);
\draw[very thick] (36) -- (45) -- (12);
\draw (12) -- (34) -- (56);
\draw[very thick] (56) -- (24) -- (13);
\draw (13) -- (46) -- (25);
\draw[very thick] (12) -- (35) -- (46);
\draw[very thick] (14) -- (23) -- (56);
\draw[very thick] (13) -- (26) -- (45);
\draw[very thick] (36) -- (15) -- (24);
\draw[very thick] (25) -- (16) -- (34);
\draw plot coordinates{(26) (14) (35)};
\draw plot coordinates{(23) (46) (15)};
\draw plot coordinates{(35) (24) (16)};
\draw[very thick] plot coordinates{(15) (34) (26)};
\draw plot coordinates{(16) (45) (23)};

\node at (25) {};
\node at (34) {};
\node at (16) {};
\node at (36) {};
\node at (24) {};
\node at (15) {};
\node at (12) {};
\node at (46) {};
\node at (35) {};
\node at (56) {};
\node at (14) {};
\node at (23) {};
\node at (13) {};
\node at (45) {};
\node at (26) {};
\end{tikzpicture}
\end{subfigure}\\
\begin{subfigure}{0.4\textwidth}
\begin{tikzpicture}[every plot/.style={smooth, tension=2},
  scale=1,
  every node/.style={scale=0.7,fill=white,circle,draw=black}
]
\node[scale=1.3,draw=white] at (-.75,1) {7B};

\coordinate (25) at (0,1.0);
\coordinate (34) at (0,-0.80);
\coordinate (16) at (0,-0.40);
\coordinate (36) at (-0.95,0.30);
\coordinate (24) at (0.76,-0.24);
\coordinate (15) at (0.38,-0.12);
\coordinate (12) at (-0.58,-0.80);
\coordinate (46) at (0.47,0.65);
\coordinate (35) at (0.23,0.32);
\coordinate (56) at (0.58,-0.80);
\coordinate (14) at (-0.47,0.65);
\coordinate (23) at (-0.23,0.32);
\coordinate (13) at (0.95,0.30);
\coordinate (45) at (-0.76,-0.24);
\coordinate (26) at (-0.38,-0.12);

\draw (25) -- (14) -- (36);
\draw[very thick] (36) -- (45) -- (12);
\draw[very thick] (12) -- (34) -- (56);
\draw[very thick] (56) -- (24) -- (13);
\draw (13) -- (46) -- (25);
\draw[very thick] (12) -- (35) -- (46);
\draw[very thick] (14) -- (23) -- (56);
\draw[very thick] (13) -- (26) -- (45);
\draw[very thick] (36) -- (15) -- (24);
\draw (25) -- (16) -- (34);
\draw plot coordinates{(26) (14) (35)};
\draw plot coordinates{(23) (46) (15)};
\draw plot coordinates{(35) (24) (16)};
\draw plot coordinates{(15) (34) (26)};
\draw plot coordinates{(16) (45) (23)};

\node at (25) {};
\node at (34) {};
\node at (16) {};
\node at (36) {};
\node at (24) {};
\node at (15) {};
\node[very thick] at (12) {};
\node at (46) {};
\node at (35) {};
\node[very thick] at (56) {};
\node at (14) {};
\node at (23) {};
\node at (13) {};
\node at (45) {};
\node at (26) {};
\end{tikzpicture}
\end{subfigure}
\begin{subfigure}{0.4\textwidth}
\begin{tikzpicture}[every plot/.style={smooth, tension=2},
  scale=1,
  every node/.style={scale=0.7,fill=white,circle,draw=black}
]
\node[scale=1.3,draw=white] at (-.75,1) {8B};

\coordinate (25) at (0,1.0);
\coordinate (34) at (0,-0.80);
\coordinate (16) at (0,-0.40);
\coordinate (36) at (-0.95,0.30);
\coordinate (24) at (0.76,-0.24);
\coordinate (15) at (0.38,-0.12);
\coordinate (12) at (-0.58,-0.80);
\coordinate (46) at (0.47,0.65);
\coordinate (35) at (0.23,0.32);
\coordinate (56) at (0.58,-0.80);
\coordinate (14) at (-0.47,0.65);
\coordinate (23) at (-0.23,0.32);
\coordinate (13) at (0.95,0.30);
\coordinate (45) at (-0.76,-0.24);
\coordinate (26) at (-0.38,-0.12);

\draw[very thick] (25) -- (14) -- (36);
\draw (36) -- (45) -- (12);
\draw (12) -- (34) -- (56);
\draw (56) -- (24) -- (13);
\draw[very thick] (13) -- (46) -- (25);
\draw (12) -- (35) -- (46);
\draw (14) -- (23) -- (56);
\draw (13) -- (26) -- (45);
\draw (36) -- (15) -- (24);
\draw[very thick] (25) -- (16) -- (34);
\draw[very thick] plot coordinates{(26) (14) (35)};
\draw[very thick] plot coordinates{(23) (46) (15)};
\draw[very thick] plot coordinates{(35) (24) (16)};
\draw[very thick] plot coordinates{(15) (34) (26)};
\draw[very thick] plot coordinates{(16) (45) (23)};

\node[very thick] at (25) {};
\node at (34) {};
\node[very thick] at (16) {};
\node at (36) {};
\node at (24) {};
\node at (15) {};
\node at (12) {};
\node at (46) {};
\node at (35) {};
\node at (56) {};
\node at (14) {};
\node at (23) {};
\node at (13) {};
\node at (45) {};
\node at (26) {};
\end{tikzpicture}
\end{subfigure}
\end{minipage} \caption{Generic representatives of the twelve different types of configurations of 
  negative lines (bold) that can be found in a multi-qubit doily.}
\label{doily-nl}
\end{figure}
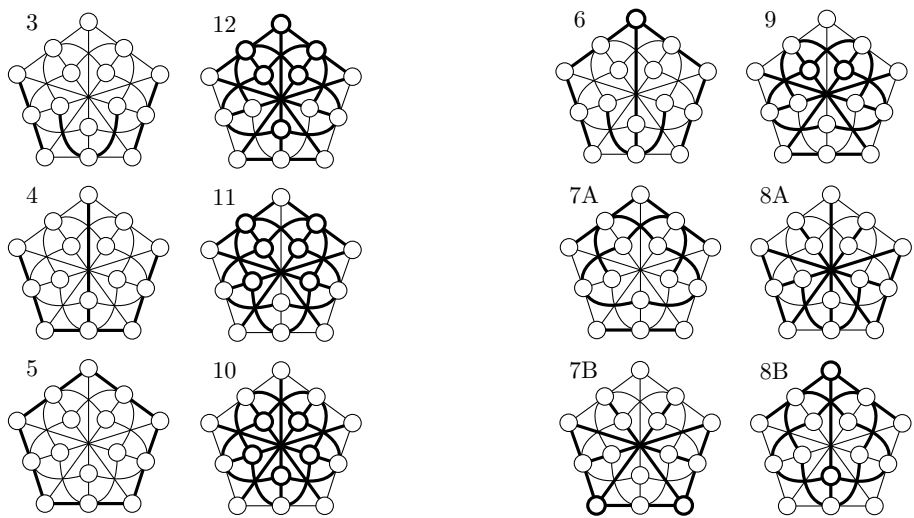

Let us give a brief description of the individual types of configurations. In
Type 3 the three negative lines are pairwise disjoint and lie in a grid; that
is, their dual is a tricentric triad. Type 4 features three pairwise disjoint
lines not belonging to a grid and their unique transversal. In Type 5 the five
negative lines form a pentagon. Type 6 contains the three lines from Type 3
plus three concurrent lines, whose point of concurrence is not lying on any of
the three former lines. Type 7A contains six lines forming a hexagon and a
unique line disjoint from any of the six. Type 7B is a particular union of two
Types 4 and an extra line or, equivalently, is composed of the five lines of a
grid and two disjoint lines. A two-qubit doily features just a Type 3 pattern,
while in a three-qubit doily we can find all the patterns from Type 3 to Type
7A inclusive~\cite{SdHG21}.

\subsection{Linear and quadratic doilies}

Following~\cite{SdHG21}, we will also distinguish between two kinds of doilies,
referred to as linear and quadratic. A \emph{linear} $N$-qubit doily spans a
PG$(3,2)$ of the ambient PG$(2N-1, 2)$. This means that the three lines of a
perp-set of such a doily are coplanar, i.\,e. lie in a PG$(2,2)$ of the
PG$(3,2)$, a tricentric triad corresponds to a line of the PG$(3,2)$ and the
plane defined by a unicentric triad of the doily passes through its center.
Figure \ref{fig:proof-linear} serves as a graphical illustration of these
features for $N=4$. In the doily we selected a perp-set (blue) and colored the
remaining lines red. The model of PG$(3,2)$ is based on a 3-D tetrahedral model
of Polster \cite{Pol98}; our version features all the points but not all the
lines of the model in order to avoid too crowded appearance of the figure. The
two red points at the side lie on the line passing via $IYZI$ that would be
perpendicular to the plane of the drawings. Each black line of the PG$(3,2)$ is
non-isotropic and corresponds to a tricentric triad in the doily.

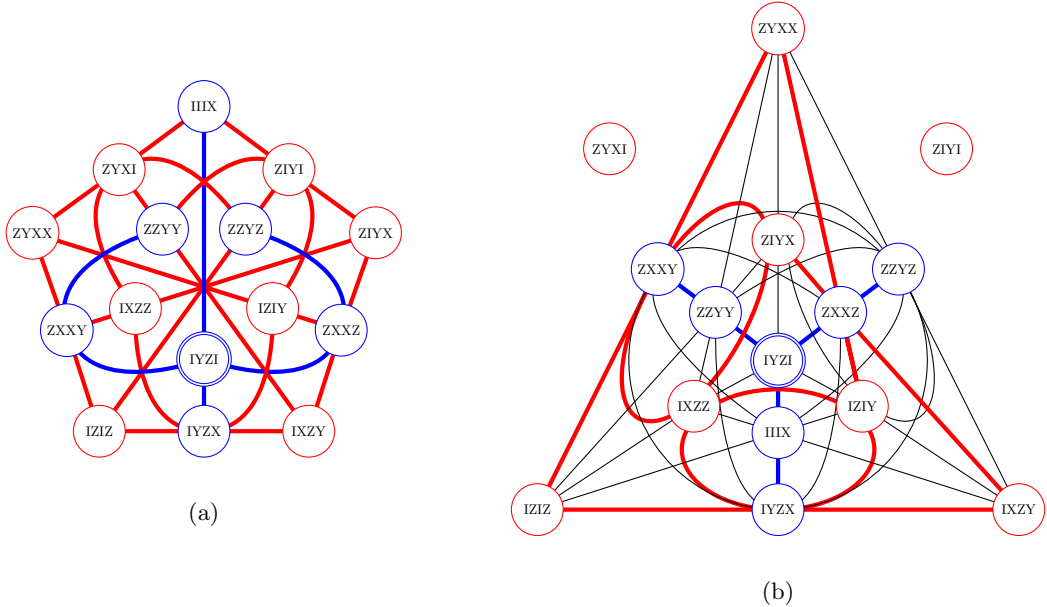
\begin{figure}[!ht]
  \begin{subfigure}{0.5\textwidth}
  \begin{center}

\begin{tikzpicture}[every plot/.style={smooth, tension=2},
  scale=2.4,
  every node/.style={scale=0.53,fill=white,circle,draw=black, minimum size=1.3cm}
]
\coordinate (iiix) at (0,1.0);
\coordinate (iyzx) at (0,-0.80);
\coordinate (iyzi) at (0,-0.40);
\coordinate (zyxx) at (-0.95,0.30);
\coordinate (zxxz) at (0.76,-0.24);
\coordinate (iziy) at (0.38,-0.12);
\coordinate (iziz) at (-0.58,-0.80);
\coordinate (ziyi) at (0.47,0.65);
\coordinate (zzyz) at (0.23,0.32);
\coordinate (ixzy) at (0.58,-0.80);
\coordinate (zyxi) at (-0.47,0.65);
\coordinate (zzyy) at (-0.23,0.32);
\coordinate (ziyx) at (0.95,0.30);
\coordinate (zxxy) at (-0.76,-0.24);
\coordinate (ixzz) at (-0.38,-0.12);
\draw [draw=red,ultra thick] (iiix) -- (zyxi) -- (zyxx);
\draw [draw=red,ultra thick] (zyxx) -- (zxxy) -- (iziz);
\draw [draw=red,ultra thick] (iziz) -- (iyzx) -- (ixzy);
\draw [draw=red,ultra thick] (ixzy) -- (zxxz) -- (ziyx);
\draw [draw=red,ultra thick] (ziyx) -- (ziyi) -- (iiix);
\draw [draw=red,ultra thick] (iziz) -- (zzyz) -- (ziyi);
\draw [draw=red,ultra thick] (zyxi) -- (zzyy) -- (ixzy);
\draw [draw=red,ultra thick] (ziyx) -- (ixzz) -- (zxxy);
\draw [draw=red,ultra thick] (zyxx) -- (iziy) -- (zxxz);
\draw [draw=blue,ultra thick] (iiix) -- (iyzi) -- (iyzx);
\draw [draw=red,ultra thick] plot coordinates{(ixzz) (zyxi) (zzyz)};
\draw [draw=red,ultra thick] plot coordinates{(zzyy) (ziyi) (iziy)};
\draw [draw=blue,ultra thick] plot coordinates{(zzyz) (zxxz) (iyzi)};
\draw [draw=red,ultra thick] plot coordinates{(iziy) (iyzx) (ixzz)};
\draw [draw=blue,ultra thick] plot coordinates{(iyzi) (zxxy) (zzyy)};
\node [draw=blue] at (iiix) {IIIX};
\node [draw=blue] at (iyzx) {IYZX};
\node [double, draw=blue] at (iyzi) {IYZI};
\node [draw=red] at (zyxx) {ZYXX};
\node [draw=blue] at (zxxz) {ZXXZ};
\node [draw=red] at (iziy) {IZIY};
\node [draw=red] at (iziz) {IZIZ};
\node [draw=red] at (ziyi) {ZIYI};
\node [draw=blue] at (zzyz) {ZZYZ};
\node [draw=red] at (ixzy) {IXZY};
\node [draw=red] at (zyxi) {ZYXI};
\node [draw=blue] at (zzyy) {ZZYY};
\node [draw=red] at (ziyx) {ZIYX};
\node [draw=blue] at (zxxy) {ZXXY};
\node [draw=red] at (ixzz) {IXZZ};
\end{tikzpicture}   \end{center}
  \caption{}
  \label{doily-linear}
  \end{subfigure}
  \begin{subfigure}{0.5\textwidth} 
  \begin{center}
\begin{tikzpicture}[every plot/.style={smooth, tension=2},
  scale=3.2,
  every node/.style={scale=0.53,fill=white,circle,draw=black, minimum size=1.3cm}
]
\coordinate (iyzi) at (0,-0.38);
\coordinate (iziz) at (-1,-1);
\coordinate (ixzy) at (1,-1);
\coordinate (iyzx) at (0,-1);
\coordinate (zzyz) at (0.5,0);
\coordinate (zxxy) at (-0.5,0);
\coordinate (zyxx) at (0,1);
\coordinate (ziyx) at (0,0.122);
\coordinate (zxxz) at (0.26,-0.18);
\coordinate (iziy) at (0.35,-0.57);
\coordinate (iiix) at (0,-0.68);
\coordinate (ixzz) at (-0.35,-0.57);
\coordinate (zzyy) at (-0.26,-0.18);
\coordinate (zyxi) at (-0.7,0.5);
\coordinate (ziyi) at (0.7,0.5);

\draw (zyxx) -- (ziyx) -- (iyzi);
\draw [draw=blue,ultra thick](iyzi) -- (iiix) -- (iyzx);
\draw [draw=red,ultra thick](zyxx) -- (zxxy) -- (iziz);
\draw (zyxx) -- (zzyz) -- (ixzy);
\draw [draw=red,ultra thick](iziz) -- (iyzx) -- (ixzy);
\draw (iziz) -- (ixzz) -- (iyzi);
\draw [draw=blue,ultra thick](iyzi) -- (zxxz) -- (zzyz);
\draw [draw=blue,ultra thick](zxxy) -- (zzyy) -- (iyzi);
\draw (iyzi) -- (iziy) -- (ixzy);
\draw[rotate=28] (0.22, -0.32) ellipse (0.22cm and 0.5cm);
\draw[rotate=-28,draw=red,ultra thick] (-0.22, -0.32) ellipse (0.22cm and 0.5cm);
\draw[draw=red,ultra thick] (0, -0.75) ellipse (0.4cm and 0.25cm);
\draw (iyzi) circle[radius=0.62];
\draw plot coordinates{(zzyy) (iyzx) (zxxz)};
\draw plot coordinates{(iiix) (zxxy) (zxxz)};
\draw plot coordinates{(zzyy) (zzyz) (iiix)};
\draw (zyxx) -- (zzyy) -- (ixzz);
\draw[draw=red,ultra thick] (zyxx) -- (iziy) -- (zxxz);
\draw[draw=red,ultra thick] (ixzy) -- (zxxz) -- (ziyx);
\draw (ixzy) -- (iiix) -- (ixzz);
\draw (iziz) -- (iiix) -- (iziy);
\draw (ziyx) -- (zzyy) -- (iziz);

\node[double,draw=blue] at (iyzi) {IYZI};%
\node[draw=red] at (iziz) {IZIZ};%
\node[draw=red] at (ixzy) {IXZY};%
\node[draw=blue] at (iyzx) {IYZX};%
\node[draw=blue] at (zzyz) {ZZYZ};%
\node[draw=blue] at (zxxy) {ZXXY};%
\node[draw=red] at (zyxx) {ZYXX};%
\node[draw=red] at (ziyx) {ZIYX};%
\node[draw=blue] at (zxxz) {ZXXZ};%
\node[draw=red] at (iziy) {IZIY};%
\node[draw=blue] at (iiix) {IIIX};%
\node[draw=red] at (ixzz) {IXZZ};%
\node[draw=blue] at (zzyy) {ZZYY};%
\node[draw=red] at (zyxi) {ZYXI};%
\node[draw=red] at (ziyi) {ZIYI};%
\end{tikzpicture}   \end{center}
  \caption{}
  \label{fano-4space}
  \end{subfigure}
  \caption{A linear four-qubit doily with one of its perp-sets highlighted in 
    blue color (a) and the corresponding PG$(3,2)$ of PG$(7,2)$ it spans (b). One 
    can readily see that the three lines of the perp-set lie in a plane of the 
    PG$(3,2)$ and the three points on a non-isotropic line of the space (black) 
    correspond to a tricentric triad of the doily.}
  \label{fig:proof-linear}
\end{figure}

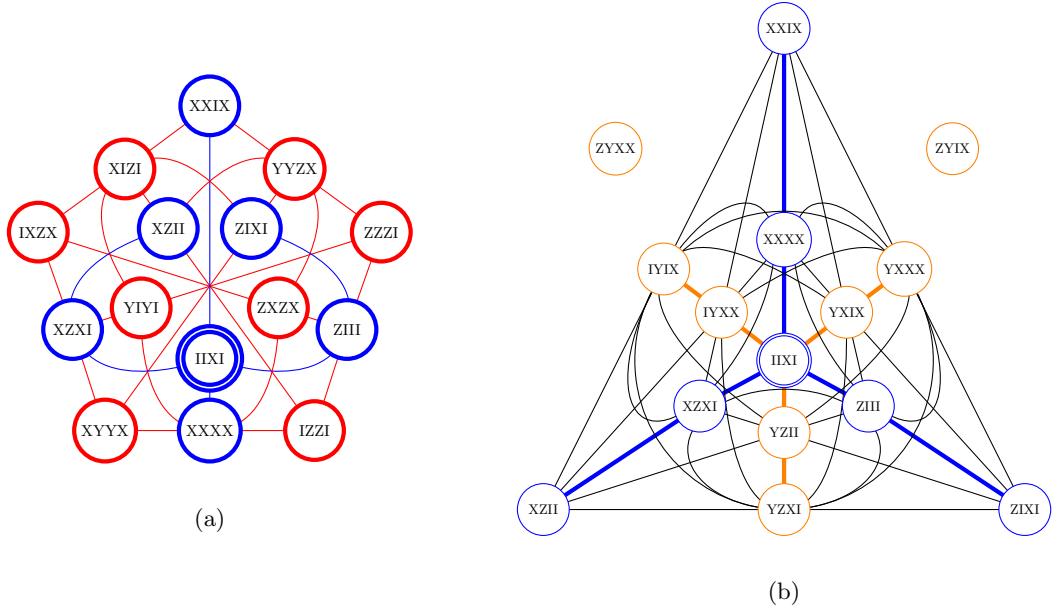
\begin{figure}[!ht]
  \begin{subfigure}{0.5\textwidth}
  \begin{center}
\begin{tikzpicture}[every plot/.style={smooth, tension=2},
  scale=2.4,
  every node/.style={scale=0.6,fill=white,circle,draw=black, minimum size=1.3cm}
]
\coordinate (xxix) at (0,1.0);
\coordinate (xxxx) at (0,-0.80);
\coordinate (iixi) at (0,-0.40);
\coordinate (ixzx) at (-0.95,0.30);
\coordinate (ziii) at (0.76,-0.24);
\coordinate (zxzx) at (0.38,-0.12);
\coordinate (xyyx) at (-0.58,-0.80);
\coordinate (yyzx) at (0.47,0.65);
\coordinate (zixi) at (0.23,0.32);
\coordinate (izzi) at (0.58,-0.80);
\coordinate (xizi) at (-0.47,0.65);
\coordinate (xzii) at (-0.23,0.32);
\coordinate (zzzi) at (0.95,0.30);
\coordinate (xzxi) at (-0.76,-0.24);
\coordinate (yiyi) at (-0.38,-0.12);
\draw [draw=red] (xxix) -- (xizi) -- (ixzx);
\draw [draw=red] (ixzx) -- (xzxi) -- (xyyx);
\draw [draw=red] (xyyx) -- (xxxx) -- (izzi);
\draw [draw=red] (izzi) -- (ziii) -- (zzzi);
\draw [draw=red] (zzzi) -- (yyzx) -- (xxix);
\draw [draw=red] (xyyx) -- (zixi) -- (yyzx);
\draw [draw=red] (xizi) -- (xzii) -- (izzi);
\draw [draw=red] (zzzi) -- (yiyi) -- (xzxi);
\draw [draw=red] (ixzx) -- (zxzx) -- (ziii);
\draw [draw=blue] (xxix) -- (iixi) -- (xxxx);
\draw [draw=red] plot coordinates{(yiyi) (xizi) (zixi)};
\draw [draw=red] plot coordinates{(xzii) (yyzx) (zxzx)};
\draw [draw=blue] plot coordinates{(zixi) (ziii) (iixi)};
\draw [draw=red] plot coordinates{(zxzx) (xxxx) (yiyi)};
\draw [draw=blue] plot coordinates{(iixi) (xzxi) (xzii)};

\node [ultra thick,draw=blue] at (xxix) {XXIX};
\node [ultra thick,draw=blue] at (xxxx) {XXXX};
\node [double, ultra thick,draw=blue] at (iixi) {IIXI};
\node [ultra thick,draw=red] at (ixzx) {IXZX};
\node [ultra thick,draw=blue] at (ziii) {ZIII};
\node [ultra thick,draw=red] at (zxzx) {ZXZX};
\node [ultra thick,draw=red] at (xyyx) {XYYX};
\node [ultra thick,draw=red] at (yyzx) {YYZX};
\node [ultra thick,draw=blue] at (zixi) {ZIXI};
\node [ultra thick,draw=red] at (izzi) {IZZI};
\node [ultra thick,draw=red] at (xizi) {XIZI};
\node [ultra thick,draw=blue] at (xzii) {XZII};
\node [ultra thick,draw=red] at (zzzi) {ZZZI};
\node [ultra thick,draw=blue] at (xzxi) {XZXI};
\node [ultra thick,draw=red] at (yiyi) {YIYI};
\end{tikzpicture}   \end{center}
  \caption{}
  \label{doily-quadratic}
  \end{subfigure}
  \begin{subfigure}{0.5\textwidth} 
  \begin{center}
\begin{tikzpicture}[every plot/.style={smooth, tension=2},
  scale=3.2,
  every node/.style={scale=0.53,fill=white,circle,draw=black, minimum size=1.3cm}
]
\coordinate (iixi) at (0,-0.38);
\coordinate (xzii) at (-1,-1);
\coordinate (zixi) at (1,-1);
\coordinate (yzxi) at (0,-1);
\coordinate (yxxx) at (0.5,0);
\coordinate (iyix) at (-0.5,0);
\coordinate (xxix) at (0,1);
\coordinate (xxxx) at (0,0.122);
\coordinate (yxix) at (0.26,-0.18);
\coordinate (ziii) at (0.35,-0.57);
\coordinate (yzii) at (0,-0.68);
\coordinate (xzxi) at (-0.35,-0.57);
\coordinate (iyxx) at (-0.26,-0.18);
\coordinate (zyxx) at (-0.7,0.5);
\coordinate (zyix) at (0.7,0.5);

\draw [draw=blue,ultra thick] (xxix) -- (xxxx) -- (iixi);
\draw [draw=orange,ultra thick](iixi) -- (yzii) -- (yzxi);
\draw (xzii) -- (iyix) -- (xxix);
\draw (xxix) -- (yxxx) -- (zixi);
\draw (xzii) -- (yzxi) -- (zixi);
\draw [draw=blue,ultra thick](xzii) -- (xzxi) -- (iixi);
\draw [draw=orange,ultra thick](iixi) -- (yxix) -- (yxxx);
\draw [draw=orange,ultra thick](iyix) -- (iyxx) -- (iixi);
\draw [draw=blue,ultra thick](iixi) -- (ziii) -- (zixi);
\draw[rotate=28] (0.22, -0.32) ellipse (0.22cm and 0.5cm);
\draw[rotate=-28] (-0.22, -0.32) ellipse (0.22cm and 0.5cm);
\draw (0, -0.75) ellipse (0.4cm and 0.25cm);
\draw (iixi) circle[radius=0.62];
\draw plot coordinates{(iyxx) (yzxi) (yxix)};
\draw plot coordinates{(yzii) (iyix) (yxix)};
\draw plot coordinates{(iyxx) (yxxx) (yzii)};
\draw plot [tension=1] coordinates{(xxix) (iyxx) (xzxi)};
\draw (xxix) -- (yxix) -- (ziii);
\draw plot [tension=1] coordinates{(zixi) (yxix) (xxxx)};
\draw (zixi) -- (yzii) -- (xzxi);
\draw (xzii) -- (yzii) -- (ziii);
\draw (xxxx) -- (iyxx) -- (xzii);

\node[double,draw=blue] at (iixi) {IIXI};%
\node[draw=blue] at (xzii) {XZII};%
\node[draw=blue] at (zixi) {ZIXI};%
\node[draw=orange] at (yzxi) {YZXI};%
\node[draw=orange] at (yxxx) {YXXX};%
\node[draw=orange] at (iyix) {IYIX};%
\node[draw=blue] at (xxix) {XXIX};%
\node[draw=blue] at (xxxx) {XXXX};%
\node[draw=orange] at (yxix) {YXIX};%
\node[draw=blue] at (ziii) {ZIII};%
\node[draw=orange] at (yzii) {YZII};%
\node[draw=blue] at (xzxi) {XZXI};%
\node[draw=orange] at (iyxx) {IYXX};%
\node[draw=orange] at (zyxx) {ZYXX};%
\node[draw=orange] at (zyix) {ZYIX};%
\end{tikzpicture}   \end{center}
  \caption{}
  \label{fano-3space}
  \end{subfigure}
  \caption{An illustration of the fact that a perp-set of a quadratic 
    (four-qubit) doily (a) spans a PG$(3,2)$ (b) of the PG$(4,2)$ spanned by the
    doily.}
  \label{fig:proof-quadratic}
\end{figure}

A quadratic $N$-qubit doily spans a PG$(4,2)$ of the ambient PG$(2N-1,2)$, being,
in fact, isomorphic to the geometry formed by 15 points and 15 lines lying on a
parabolic quadric $Q(4,2)$ in this PG$(4,2)$. This quadric, as any other
parabolic quadric in PG$(4,2)$, has a remarkable property that all its tangent
hyperplanes pass through the same point $J$, called the nucleus (see, e.\,g.,
\cite{HT91}). Any tricentric triad of such a doily defines a plane in the 
PG$(4,2)$ that contains $J$; a unicentric triad also defines a plane, this plane
passing through the remaining third point lying on the line defined by $J$ and
the (unique) center of the triad. Moreover, all the 15 PG$(3,2)$s passing
through $J$ intersect our quadric in three concurrent lines that form a
perp-set of the doily. Figure \ref{fig:proof-quadratic} offers a pictorial
illustration of some of these properties. We again take a four-qubit doily,
where we highlighted a perp-set (blue). Now the three lines of the perp-set are
not coplanar as in the case of linear doily, but span a PG$(3,2)$. We colored
the remaining eight points (and the totally-isotropic lines) of the PG$(3,2)$ in
yellow in order to stress the property that the only points shared by the doily
and this PG$(3,2)$ are the (blue) points of the perp-set. There are two
``distinguished'' points of the PG$(3,2)$, namely $ZYXX$ and $ZYIX$, which lie on
the remaining seventh line passing via $IIXI$; the point $ZYXX$ is nothing but
the nucleus of the parabolic quadric our particular doily is located on. Given a
perp-set, we know that there are four tricentric and four unicentric triads
contained in it. In our particular perp, the four \emph{tri}centric triads are
$\{XXIX,XZII,ZIXI\}$, $\{XXIX,XZXI,ZIII\}$, $\{XZXI,XXXX,ZIXI\}$ and 
$\{XXXX,ZIII,XZII\}$; one can readily check that the product of the three
observables in any of them is $ZYXX$ (the nucleus). The four \emph{uni}centric 
triads of our perp-set are $\{XZXI,XXXX,ZIII\}$, $\{XZXI,ZIXI,XXIX\}$, 
$\{ZIII,XZII,XXIX\}$ and $\{XXX,XZII,ZIXI\}$; the product of the observables in
any of them is $ZYIX$, i.\,e. the second distinguished point. By this
construction we get a (different) PG$(3,2)$ for any of the 15 perp-sets of the
doily; and because in any of these perp-sets the four tricentric triads always
define the nucleus, $ZYXX$, we get altogether 15 PG$(3,2)$s that share the point
$ZYXX$, these 15 spaces lying in that particular PG$(4,2)$ of the ambient PG$
(7,2)$ that contains the quadric of our selected doily.

In a recent paper~\cite{SdHG21}, four of the authors
have thoroughly analyzed and classified three-qubit doilies. To this end, they
first explicitly computed all 63 perp-sets, 36 hyperbolic quadrics and 28
elliptic quadrics living in $\mathcal{W}(5,2)$. Then, employing the fact that a linear doily
is isomorphic to the intersection of two perp-sets with non-collinear nuclei,
they computed and classified all $63 \times 32/3! = 336$ linear doilies of the
$\mathcal{W}(5,2)$.
In the next step, making use of the property that a quadratic doily is
isomorphic to the intersection of an elliptic quadric and a hyperbolic quadric,
they generated and classified all $36 \times 28 = \numprint{1008}$ quadratic doilies of the
$\mathcal{W}(5,2)$.
The procedure described above is, however, not a viable one for $N > 3$, as we
would first need to compute all $\mathcal{W}(5,2)$s living in a particular $\mathcal{W}(2N-1,2)$, $N>3$, and
then in each of them compute 336 linear and \numprint{1008} quadratic doilies following the
strategy of \cite{SdHG21}. Instead, we shall follow (in Section 4) a different,
and reasonably faster, approach that makes use of some properties of an ovoid of
a doily. In particular, we shall start with a particular $N$-qubit ovoid, i.\,e.
a set of five mutually anticommuting $N$-qubit observables whose product is
$\pm{\cal I}_N$, and introduce a unique algebro-geometrical recipe with the help
of which one can find all the $N$-qubit doilies having this particular ovoid in
common. Before embarking on this path, however, we shall introduce several
general formulas for the number of both linear and quadratic doilies of $\mathcal{W}(2N-1,2)$,
valid for any $N \geq 2$, so that we already have certain important numbers at
hand to validate some of our subsequent, mostly computer-assisted,
results.

\subsection{Contextuality degree}

All multi-qubit doilies are observable-based proofs of the Kochen--Specker
theorem, that establishes that no Non-Contextual Hidden Variables (NCHV) model
can reproduce the outcomes of quantum mechanics. This contextuality property is
 related to a linear problem, as follows.  Let $A$ be the
incidence matrix of the points on the lines of a finite geometry, such as the
doily. Its coefficients are in the two-elements field $\mathbb{F}_2 =
\{0,1\}$, its $l$ rows correspond to the
geometric lines and its $p$ columns to the geometric points (for the doily,
$l=p=15$). The positive (resp. negative) nature of a line is encoded by a $0$
(resp. $1$) for the corresponding coefficient of the \emph{valuation vector} $E$
in $\mathbb{F}_2^{l}$. Then a quantum geometry is contextual iff there is no
vector $x$ such that $Ax = E$. The \emph{contextuality degree} is the minimal
Hamming distance between a vector $Ax$ and the vector $E$~\cite{dHG+22}. The
contextuality degree is the minimal number of line valuations
that one should change to make the quantum geometry satisfiable by an NCHV
model.

\begin{proposition}\label{contextuality-prop}
All multi-qubit doilies have a contextuality degree of 3.
\end{proposition}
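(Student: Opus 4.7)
The plan is to identify the contextuality degree with the minimum Hamming weight in the coset $E + \mathrm{Im}(A) \subset \mathbb{F}_2^{15}$, and to bound that minimum on both sides using only the grid-parity constraint stated at the start of Section~\ref{basic-sec}. The key algebraic observation is that each of the doily's ten sub-$\mathrm{GQ}(2,1)$ grids $G$ has every point incident with exactly two of its six lines, so $\mathbf{1}_G \in \ker(A^T) = \mathrm{Im}(A)^{\perp}$; the oddness of the number of negative lines in $G$ then reads $E \cdot \mathbf{1}_G = 1$, a well-defined linear constraint on the cokernel $\mathbb{F}_2^{15}/\mathrm{Im}(A)$ that every multi-qubit doily valuation satisfies.

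For the lower bound, I would rule out every candidate $v = E + \mathbf{1}_S$ with $|S| \leq 2$ by exhibiting a grid $G$ with $v \cdot \mathbf{1}_G = 1$, using only that each line of the doily lies in exactly $4$ of the $10$ grids. The case $|S| = 0$ is immediate, and for $|S| = 1$ one of the $6$ grids avoiding the flipped line works. For $|S| = 2$, say $S = \{\ell_1, \ell_2\}$, inclusion--exclusion shows that $10 - 4 - 4 + k = 2 + k \geq 2$ grids contain \emph{neither} $\ell_1$ nor $\ell_2$ (with $k$ the number containing both), and any such grid gives $v \cdot \mathbf{1}_G = 1 + 0 + 0 = 1$. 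Thus $v \notin \mathrm{Im}(A)$ in all three cases, so $d_H(Ax, E) \geq 3$ for every $x \in \mathbb{F}_2^{15}$.

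For the upper bound, the Type~3 pattern (three pairwise disjoint negative lines inside a common grid) has weight $3$ and satisfies every grid parity, so it is a natural candidate representative of $E + \mathrm{Im}(A)$. The non-routine step is to verify that the Type~3 pattern actually represents the coset of every multi-qubit doily, equivalently that the ten indicator vectors $\mathbf{1}_G$ span the entire space $\ker(A^T)$. This reduces to an explicit $\mathbb{F}_2$-rank check on the $15 \times 15$ incidence matrix of $\mathrm{GQ}(2,2)$, namely that $\mathrm{rank}(A) = 10$ (so $\dim \ker(A^T) = 5$) and that the ten grid vectors collectively span a $5$-dimensional subspace of $\ker(A^T)$. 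Granting this, every doily valuation shares the Type~3 coset and therefore admits a weight-$3$ representative in $E + \mathrm{Im}(A)$, giving $d_H(Ax, E) = 3$ for some $x$.

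The main obstacle is this spanning step, the only point at which the generic parity argument must be supplemented by intrinsic linear algebra of the doily. Since the entire argument takes place inside the fixed $15 \times 15$ incidence structure and is blind to the ambient $\mathcal{W}(2N-1,2)$, both bounds are uniform in $N \geq 2$ and the proposition follows.
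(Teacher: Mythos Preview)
Your proposal is correct and takes a genuinely different, more structural route than the paper's proof. The paper simply observes that all doilies share the same incidence matrix $A$ and that $E$ is determined by one of the twelve negative-line configurations of Figure~\ref{doily-nl}; it then brute-forces, for each configuration, the minimum of $d_H(Ax,E)$ over all $2^{15}$ vectors $x$, and reports that the answer is always $3$. No structural reason is offered.

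Your argument, by contrast, explains \emph{why}: the grid-parity constraint $E\cdot\mathbf 1_G=1$ is a coset invariant (since $\mathbf 1_G\in\ker A^T$), and your lower-bound count---each line in $4$ of the $10$ grids, hence at least $2+k\geq 2$ grids avoid any pair of lines---cleanly forces weight $\geq 3$ in $E+\mathrm{Im}(A)$. For the upper bound you reduce everything to a single coset: if the ten $\mathbf 1_G$ span $\ker A^T$, then all multi-qubit valuations (which all satisfy the ten parities) lie in one coset, and the Type~3 pattern---itself the valuation of the two-qubit doily, hence satisfying every grid parity---is a weight-$3$ witness in that coset. This replaces the paper's $12\times 2^{15}$ search by a single $\mathbb F_2$-rank computation.

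The one point you flag as ``the main obstacle'' is genuine but modest: one must check $\mathrm{rank}_{\mathbb F_2}(A)=10$ and that the grid vectors span the $5$-dimensional $\ker A^T$. In the duad--syntheme model this becomes a pleasant exercise (grid complements are the $2K_3$ subgraphs of $K_6$, $\ker A$ consists of the functions $v(ab)=f(a)+f(b)+\sum_i f(i)$, and both spaces have dimension~$5$). You should either carry this out or cite it, but once done your proof is complete and uniform in $N$, whereas the paper's is a finite verification that gives no insight into the bound.
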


\begin{proof}
All multi-qubit doilies have the same incidence matrix $A$. Accordingly, the
only parameter that is changing between all the doilies is the vector $E$, which
only depends on the configuration of their negative lines. We have seen that
there are only 12 such configurations. For each of these 12 configurations, we
have computed the Hamming distance between $Ax$ and $E$, for all vectors $x$ in
$\mathbb{F}_2^{15}$. It turns out that the minimal Hamming distance is always 3.
\end{proof}

In practice, we did not write by hand the 12 possible $E$ vectors, but we
computed these vectors from the 5-qubit doilies, because, as described later,
we have checked by enumeration that these doilies present all the configurations.

\section{Numbers of multi-qubit doilies}
\label{nb-sec}

This section proposes and justifies closed formulas for the numbers 
 of linear and quadratic doilies in $\mathcal{W}(2N-1,2)$.
Before all we introduce some well-known formulas. First, we introduce the
 Gaussian (binomial) coefficient 
\begin{equation}
\qbin{n}{k} = \prod_{i=1}^{k} \frac{q^{n-k+i} -1}{q^i -1} 
            = \frac{(q^n-1) \dots (q^{n-k+1} -1)}{(q^k-1) \dots (q-1)}
\label{gauss}
\end{equation}
where $0 \leq k \leq n$ and $q$ is a power of a prime, which gives the number of
subspaces of dimension\footnote{All dimensions in this section are projective 
dimensions.} $k-1$ in a projective space PG$(n-1,q)$ of dimension $n-1$ over 
$\mathbb{F}_q$. More generally, the number of  $(k-1)$-dimensional spaces of PG$(n-1,q)$
that pass through a fixed $(l-1)$-dimensional space is
\begin{equation}
\qbin{n-l}{k-l}.
\label{subvsub-pr}
\end{equation}
Next, for a symplectic polar space $\mathcal{W}(2N-1,q)$ embedded in a projective 
space PG$(2N-1,q)$, the number of its $k$-dimensional spaces is given by (see,
e.\,g.,~\cite[Lemma 2.10]{DRSS19})
\begin{equation}
  \qbin{N}{k+1}  \prod_{i=1}^{k+1} \left(q^{N+1-i} +1\right)
  \label{sub-sympl}
\end{equation}
and the number of $k$-dimensional spaces through a fixed $m$-dimensional 
space~\cite[Corollary 2.11]{DRSS19} equals 
\begin{equation}
\qbin{N-m-1}{k-m}  \prod_{i=1}^{k-m} \left(q^{N-m-i} +1\right).
\label{subvsub-sympl}
\end{equation}
Further, let $\perp$ be a symplectic polarity of PG$(n,q)$ and let denote by
$S^{\perp}$ the polar space of a subspace $S$. 
If $S$ is of dimension $k$, then $S^{\perp}$ has dimension $n-k-1$.
A projective subspace $S$ of
PG$(n,q)$ is called \emph{isotropic} if $S \cap S^{\perp} \neq \emptyset$ and
\emph{non-isotropic} if $S \cap S^{\perp}=\emptyset$. An isotropic $S$ is called
\emph{totally isotropic} if $S \subseteq S^{\perp}$. It is easy to see that if
$S$ is a totally isotropic subspace, then every subspace contained in $S$ is
also totally isotropic. Moreover,
\begin{equation}
S \subseteq T^{\perp} \Rightarrow T \subseteq S^{\perp}.
\label{sub}
\end{equation}
In order to prove the two theorems below, we will need a couple of lemmas.
\begin{lemma}
\label{2via1}
If a PG$(3,2)$  of the ambient PG$(5,2)$ equipped with a symplectic 
polarity $\perp$ contains a totally-isotropic PG$(2,2)$, then it contains
exactly three such PG$(2,2)$s, passing through a common (totally-isotropic) 
PG$(1,2)$.
\end{lemma}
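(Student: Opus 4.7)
The plan is to restrict the ambient symplectic form to the $4$-dimensional vector subspace $V$ underlying the given PG$(3,2)$ (call it $\Pi$) and to analyze its radical $R := V \cap V^{\perp}$. The key ingredient is a rank trichotomy: since alternating forms have even rank, the restricted form on $V$ has rank $0$, $2$, or $4$. Rank $4$ is excluded because then the maximal totally-isotropic subspaces of $V$ would have projective dimension only $1$, contradicting the existence of the hypothesized totally-isotropic PG$(2,2) \subseteq \Pi$. Rank $0$ is excluded because it would make $V$ itself a totally-isotropic PG$(3,2)$, while generators of $\mathcal{W}(5,2)$ only reach projective dimension $2$. Hence the rank must be $2$, and $R$ is a totally-isotropic PG$(1,2)$ sitting inside $\Pi$.

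Next, I would argue that every totally-isotropic plane $\pi \subseteq \Pi$ must contain $R$. Writing $W$ for the $3$-dimensional vector subspace of $\pi$, the ambient polar $W^{\perp}$ also has vector dimension $3$, so the inclusion $W \subseteq W^{\perp}$ forces $W = W^{\perp}$. From $W \subseteq V$ one deduces (via the polarity property~(\ref{sub})) that $V^{\perp} \subseteq W$, and intersecting with $V$ yields $R \subseteq W$.

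Finally, the totally-isotropic planes of $\Pi$ through $R$ biject with the $1$-dimensional subspaces of the quotient $V/R$. This quotient has vector dimension $2$ and carries a non-degenerate alternating form, so every $1$-dimensional subspace is automatically totally isotropic; there are exactly $2^2 - 1 = 3$ of them, giving the three planes claimed, all sharing the totally-isotropic line $R$. The only delicate step is the rank trichotomy at the start, which crucially combines the evenness of the rank of an alternating form with the structural bound on the projective dimension of generators of $\mathcal{W}(5,2)$; once that is in place, the rest follows from a clean quotient count.
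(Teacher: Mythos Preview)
Your proof is correct. The paper takes a closely related but differently organized route: it begins with a totally-isotropic line $S$, notes that $S^{\perp}$ is a PG$(3,2)$, invokes the counting formula~(\ref{subvsub-sympl}) to obtain exactly three totally-isotropic planes $T_i$ through $S$, and then uses $T_i = T_i^{\perp}$ together with property~(\ref{sub}) to place all three inside $S^{\perp}$ (and, reading~(\ref{sub}) in the other direction, to see that any totally-isotropic plane of $S^{\perp}$ must contain $S$). Both arguments ultimately identify the common line with $\Pi^{\perp}$, which is exactly your radical $R = V \cap V^{\perp}$; but your rank-trichotomy followed by the quotient count is self-contained and does not appeal to the external enumerative formula, whereas the paper's version is terser because it outsources the count to~(\ref{subvsub-sympl}). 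Your presentation also makes the structural reason for ``exactly three'' transparent: it is simply the number of points of the projective line over $\mathbb{F}_2$ arising as $\mathbb{P}(V/R)$.
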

\begin{proof}
First, there are no totally-isotropic PG$(3,2)$s in the PG$(5,2)$. Given a
totally-isotropic PG$(1,2)$ of PG$(5,2)$, $S$, there are (see 
Eq.\,(\ref{subvsub-sympl}) for $q=2$, $N=3$, $k=2$ and $m=1$) three
totally-isotropic PG$(2,2)$s passing through it. Denoting these as 
$T^{\perp}_{i}$ ($i = 1, 2, 3$), the lemma then follows from the fact that 
$S^{\perp} \cong$ PG$(3,2)$, $T^{\perp}_{i} = T_i$, and property (\ref{sub}).
\end{proof}
\begin{remark}
\label{rm52}
For $N > 3$, PG$(2N-1,2)$ features also totally-isotropic PG$(3,2)$s; any other 
of its PG$(3,2)$s endowed with totally-isotropic PG$(2,2)$s has the property as 
described in Lemma \ref{2via1}.
\end{remark}
\begin{lemma}
\label{3via2}
If a PG$(4,2)$ of the ambient PG$(7,2)$ equipped with a symplectic 
polarity $\perp$ contains a totally-isotropic PG$(3,2)$, then it contains
exactly three such PG$(3,2)$s, passing through a common (totally-isotropic) 
PG$(2,2)$.
\end{lemma}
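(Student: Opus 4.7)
The plan is to mirror the strategy used for Lemma~\ref{2via1}, shifted up by one dimension to exploit the fact that in PG$(7,2)$ equipped with a symplectic polarity (i.e.\ $N=4$), maximal totally-isotropic subspaces have projective dimension $3$, and a PG$(3,2)$ is totally isotropic iff it equals its own polar.

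First, let $U$ denote the PG$(4,2)$ in question and $T_1 \subseteq U$ a totally-isotropic PG$(3,2)$. I would compute $U^\perp$: since $U$ has projective dimension $4$ in PG$(7,2)$, $U^\perp$ has projective dimension $7-4-1=2$, so $U^\perp$ is a PG$(2,2)$. From $T_1 \subseteq U$ and property~(\ref{sub}) we get $U^\perp \subseteq T_1^\perp = T_1$ (the last equality because $T_1$ is a maximal totally-isotropic subspace, hence self-polar). Thus $S := U^\perp$ is a totally-isotropic PG$(2,2)$ sitting inside $U$.

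Next I would enumerate the totally-isotropic PG$(3,2)$s through $S$ using Eq.~(\ref{subvsub-sympl}) with $q=2$, $N=4$, $k=3$, $m=2$, which gives
\[
\qbin{1}{1}\,(2^{1}+1) = 1\cdot 3 = 3.
\]
Call these three spaces $T_1,T_2,T_3$. Each $T_i$ satisfies $S \subseteq T_i$, and since $T_i$ is totally isotropic, $T_i \subseteq T_i^\perp \subseteq S^\perp = U$. Hence all three of $T_1,T_2,T_3$ already lie in $U$.

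Finally I would rule out any other totally-isotropic PG$(3,2)$ in $U$: for any such $T'\subseteq U$, the same argument applied to $T'$ in place of $T_1$ shows $U^\perp \subseteq T'$, so $T'$ is one of the three spaces through $S$. This proves both the count and the fact that they share the common totally-isotropic PG$(2,2)$ $S=U^\perp$. The argument is essentially routine once the correct ``pivot'' space $U^\perp$ is identified; the only step requiring care is confirming that $U^\perp$ is indeed totally isotropic and of the right dimension, which follows immediately from $U^\perp \subseteq T_1$ and the dimension formula for polars, so I do not anticipate any real obstacle.
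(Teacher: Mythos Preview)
Your proof is correct and follows essentially the same approach as the paper's: both identify the common totally-isotropic PG$(2,2)$ as the polar of the ambient PG$(4,2)$ (you call it $S=U^\perp$, the paper starts from $S$ and recovers $U=S^\perp$), use Eq.~(\ref{subvsub-sympl}) with $q=2$, $N=4$, $k=3$, $m=2$ to count three totally-isotropic PG$(3,2)$s through $S$, and invoke self-polarity $T_i=T_i^\perp$ together with property~(\ref{sub}) to place them all in $U$. Your write-up is in fact more explicit than the paper's (which is quite terse), since you spell out why $S$ is totally isotropic and why no further totally-isotropic PG$(3,2)$ can occur in $U$.
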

\begin{proof}
The proof parallels that of the preceding lemma. First, there are no
totally-isotropic PG$(4,2)$s in the PG$(7,2)$. Given a totally-isotropic 
PG$(2,2)$ of PG$(7,2)$, $S$, there are (see Eq.\,(\ref{subvsub-sympl}) for $q=2$, 
$N=4$, $k=3$ and $m=2$) three %
totally-isotropic PG$(3,2)$s passing through it. Denoting these as 
$T^{\perp}_{i}$ ($i = 1, 2, 3$), the lemma then follows from the fact that 
$S^{\perp} \cong$ PG$(4,2)$, $T^{\perp}_{i} = T_i$, and property (\ref{sub}).
\end{proof}
\begin{remark}
\label{rm72}
For $N > 4$, PG$(2N-1,2)$ features also totally-isotropic PG$(4,2)$s; any other 
of its PG$(4,2)$s endowed with totally-isotropic PG$(3,2)$s has the property as 
described in Lemma \ref{3via2}.
\end{remark}

Next, through a (totally-isotropic) point of PG$(5,2)$, $S$, there pass 15
totally-isotropic PG$(1,2)$s, $T_j^{\perp}$ ($j=1,2,3, \dots,15$) and the same
number of PG$(2,2)$s. Given the facts that $S^{\perp} \cong$ PG$(4,2)$ and
$T_j \cong$ PG$(3,2)$, a PG$(4,2)$ of PG$(5,2)$ will contain 15 PG$(3,2)$s of
type defined by Lemma~\ref{2via1} concurring at a point, namely the pole of
this particular PG$(4,2)$. As PG$(4,2)$ contains altogether 31 PG$(3,2)$s, each
of the remaining 16 PG$(3,2)$s does not contain totally-isotropic PG$(2,2)$s and
so hosts a unique linear doily. As each such doily can be viewed as a projection
of a quadratic doily from the pole, a PG$(4,2)$ is found to be spanned by 16
quadratic doilies.

\begin{remark}
\label{rm16d}
If a PG$(4,2)$ of the ambient  PG$(2N-1,2)$, $N > 3$, is devoid of 
totally-isotropic PG$(3,2)$s, then it is of the type described above, i.\,e. it 
entails 16 quadratic doilies.
\end{remark}

\subsection{Number of linear doilies}

\begin{theorem}
For any $N \geq 2$ the number of linear doilies in $\mathcal{W}(2N-1,2)$ is
\begin{equation}
D_l(N) = \qbincoeff{2N}{4}{2}
-
\qbincoeff{N}{4}{2} ~\prod_{i=1}^{4} \left(2^{N+1-i} +1\right)
-
7 \qbincoeff{N}{3}{2} 2^{2N-6} ~\prod_{i=1}^{3} \left(2^{N+1-i} +1\right)
/3.
\label{lN}
\end{equation}
\end{theorem}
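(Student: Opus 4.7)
The plan is to count PG$(3,2)$'s of the ambient PG$(2N-1,2)$ that actually carry a linear doily, and then assemble the answer by inclusion–exclusion. The first observation I would establish is the characterization: a PG$(3,2)$ supports a linear doily precisely when it is neither totally isotropic nor contains a totally-isotropic PG$(2,2)$. Geometrically, such a PG$(3,2)$ inherits a non-degenerate symplectic form and its 15 totally-isotropic lines form the 15 lines of the doily; in the two excluded cases, either all 35 lines of the PG$(3,2)$ are isotropic (the TI case) or the TI-lines degenerate into three planes sharing a common TI point, as described by Lemma~\ref{2via1}. From this characterization,
\[
D_l(N) = \#\{\text{PG}(3,2)\text{ in PG}(2N-1,2)\} - \#\{\text{TI PG}(3,2)\} - \#\{\text{non-TI PG}(3,2)\text{ containing a TI PG}(2,2)\}.
\]

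For the first term I would apply (\ref{gauss}) to obtain $\qbincoeff{2N}{4}{2}$. For the second, formula (\ref{sub-sympl}) with $k=3$ gives $\qbincoeff{N}{4}{2}\prod_{i=1}^{4}(2^{N+1-i}+1)$, matching the second term of (\ref{lN}); this vanishes when $N<4$, as it should, since $\mathcal{W}(2N-1,2)$ has rank $N$.

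The third term is handled by double counting incident pairs $(\pi,\Pi)$ where $\pi$ is a totally-isotropic PG$(2,2)$ and $\Pi$ is a non-TI PG$(3,2)$ containing $\pi$. Lemma~\ref{2via1}, together with Remark~\ref{rm52} for $N>3$, ensures that every such $\Pi$ contains exactly three such $\pi$'s, so the number of $\Pi$'s equals the number of pairs divided by $3$. The number of TI PG$(2,2)$'s is $\qbincoeff{N}{3}{2}\prod_{i=1}^{3}(2^{N+1-i}+1)$ by (\ref{sub-sympl}). Through a fixed TI PG$(2,2)$ pass $\qbincoeff{2N-3}{1}{2} = 2^{2N-3}-1$ PG$(3,2)$'s in total (via (\ref{subvsub-pr})), of which $(2^{N-3}-1)(2^{N-3}+1)=2^{2N-6}-1$ are themselves TI (via (\ref{subvsub-sympl}) with $m=2$, $k=3$). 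Subtracting gives $7\cdot 2^{2N-6}$ non-TI PG$(3,2)$'s through each TI plane, and then
\[
\frac{1}{3}\,\qbincoeff{N}{3}{2}\prod_{i=1}^{3}(2^{N+1-i}+1)\cdot 7\cdot 2^{2N-6}
\]
exactly matches the third term of (\ref{lN}).

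The main obstacle is justifying the structural dichotomy cleanly: I would need to check that if a PG$(3,2)$ contains neither a TI plane nor is a TI PG$(3,2)$, then its 15 TI lines really do assemble into a doily (a GQ$(2,2)$), and also verify that no additional source of degeneracy has been overlooked in PG$(2N-1,2)$ for large $N$. Once this qualitative classification is in place, the combinatorial bookkeeping (including the division by $3$ coming from Lemma~\ref{2via1}) is mechanical, and the three contributions combine into the stated closed form (\ref{lN}).
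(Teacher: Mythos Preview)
Your proposal is correct and follows essentially the same approach as the paper's proof: count all PG$(3,2)$'s in PG$(2N-1,2)$, subtract the totally isotropic ones, and subtract the non-TI ones containing a TI plane via the same double count (using Lemma~\ref{2via1}/Remark~\ref{rm52} for the division by~$3$, and formulas~(\ref{subvsub-pr}) and~(\ref{subvsub-sympl}) to get the $7\cdot 2^{2N-6}$ factor). The only point you flag as an obstacle---that a PG$(3,2)$ devoid of TI planes carries exactly one linear doily---is likewise asserted rather than proved in detail in the paper.
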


\begin{proof}
A linear doily of $\mathcal{W}(2N-1,2)$ spans a particular PG$(3,2)$ of the
ambient PG$(2N-1,2)$ that does not contain any totally-isotropic PG$(2,2)$. And
since any such PG$(3,2)$ is spanned by a single linear doily, the number of
linear doilies of $\mathcal{W}(2N-1,2)$ is thus equal to the number of PG$(3,2)$s 
that are devoid of totally-isotropic planes. To find this number, from
Eq.\,(\ref{gauss}) we first note that
there are altogether
\begin{equation}
\qbincoeff{2N}{4}{2}
\label{lan}
\end{equation}
PG$(3,2)$s in PG$(2N-1,2)$, out of which 
\begin{equation}
\qbincoeff{N}{4}{2} ~\prod_{i=1}^{4} \left(2^{N+1-i} +1\right)
\label{tin}
\end{equation}
(Eq.\,(\ref{sub-sympl}) with $k=3$ and $q=2$) are totally isotropic.

To ascertain the cardinality of the remaining PG$(3,2)$s that feature 
totally-isotropic PG$(2,2)$s, we proceed as follows. We first observe that by
Eq.\,(\ref{sub-sympl}) with $k=2$ and $q=2$ there are 
\begin{equation} %
\qbincoeff{N}{3}{2} ~\prod_{i=1}^{3} \left(2^{N+1-i} +1\right)
\label{tipn}
\end{equation}
totally-isotropic PG$(2,2)$s 
in PG$(2N-1,2)$. Next, with $k=3$ and $m=2$ in (\ref{subvsub-sympl}), it follows 
that there are 
\begin{equation}%
   \qbincoeff{N-3}{1}{2} \left(2^{N-3} +1\right)
  = 2^{2(N-3)}-1
\label{lptipn}
\end{equation}
\noindent totally-isotropic PG$(3,2)$s passing through a totally-isotropic 
PG$(2,2)$. And since the total number of PG$(3,2)$s passing via a PG$(2,2)$ of
PG$(2N-1,2)$ is
\begin{equation}%
\qbincoeff{2N-3}{4-3}{2} 
  = 2^{2N-3}-1
\label{lppn}
\end{equation}
(as stemming from Eq.\,(\ref{subvsub-pr}) for $n=2N$, $k=4$, $l=3$ and $q=2$),
through a totally-isotropic PG$(2,2)$ there pass 
\begin{equation}
2^{2N-3}-1 - \left(2^{2(N-3)}-1\right)
= 7~\times~2^{2N-6}
\label{lnptipn}
\end{equation}
isotropic PG$(3,2)$s apart from those that are totally isotropic. Hence, the
number of those PG$(3,2)$s of PG$(2N-1,2)$ that are endowed with
totally-isotropic PG$(2,2)$s -- with the exclusion of totally isotropic ones --
amounts to 
$(\ref{tipn}) \times (\ref{lnptipn})/3$,
where we also took into account (see Remark \ref{rm52}) that any such PG$(3,2)$
features just three totally-isotropic PG$(2,2)$s. All in all, there are 
\begin{equation*}
\qbincoeff{2N}{4}{2}
-
\qbincoeff{N}{4}{2} ~\prod_{i=1}^{4} \left(2^{N+1-i} +1\right)
-
7 \qbincoeff{N}{3}{2} 2^{2N-6} ~\prod_{i=1}^{3} \left(2^{N+1-i} +1\right)
/3
\end{equation*}
PG$(3,2)$s in the ambient PG$(2N-1,2)$ that are devoid of totally-isotropic 
PG$(2,2)$s, and so the same number of
linear doilies in $\mathcal{W}(2N-1,2)$.
\end{proof}

Given the fact that the three lines of a perp-set of a linear doily span a 
PG$(2,2)$, and namely that PG$(2,2)$ that features just three totally-isotropic
PG$(1,2)$s, we arrive at the interesting expression
\begin{equation}
D_l(N) = \frac{4}{15} 4^{N-3} \Theta_2(N),
\label{dl-simp}
\end{equation}
for the number of linear doilies in $\mathcal{W}(2N-1,2)$, where
\begin{equation}
\label{theta32}
\Theta_2(N) = \frac{1}{16}2^{2N}
              \prod_{i=1}^2\frac{2^{N-2+i}-1}{2^i-1}
              \prod_{i=1}^2(2^{N+1-i} +1)
\end{equation}
is the number of those PG$(2,2)$s of the ambient PG$(2N-1,2)$ each of which 
features just three totally-isotropic PG$(1,2)$s.

\subsection{Number of quadratic doilies}

\begin{theorem}
For any $N \geq 3$ the number of quadratic doilies in $\mathcal{W}(2N-1,2)$ is
\begin{equation}
\label{qN}
D_q(N) = 16\left(\qbincoeff{2N}{5}{2}
                -\qbincoeff{N}{5}{2}\prod_{i=1}^{5}\left(2^{N+1-i} +1\right)
                -15\qbincoeff{N}{4}{2}2^{2N-8}\prod_{i=1}^{4}\left(2^{N+1-i}+1\right)/3
           \right).
\end{equation}
\end{theorem}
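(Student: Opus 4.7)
The plan is to mirror the proof of Theorem 2 (linear doilies) one projective dimension higher, using Remark \ref{rm16d} and Lemma \ref{3via2} in place of the corresponding statements for linear doilies. A quadratic doily spans a PG$(4,2)$ of the ambient PG$(2N-1,2)$, and by Remark \ref{rm16d} every PG$(4,2)$ devoid of totally-isotropic PG$(3,2)$s carries exactly 16 quadratic doilies (and conversely each quadratic doily lies in a unique such PG$(4,2)$). Hence
\begin{equation*}
D_q(N) = 16 \cdot \#\{\text{PG}(4,2)\text{s of PG}(2N-1,2)\text{ free of totally-isotropic PG}(3,2)\text{s}\},
\end{equation*}
and it suffices to count those PG$(4,2)$s by subtracting, from the total, the totally-isotropic ones and those which are merely isotropic but still contain a totally-isotropic PG$(3,2)$.

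First I would write down the total number of PG$(4,2)$s in PG$(2N-1,2)$, namely $\qbincoeff{2N}{5}{2}$ by Eq.\,(\ref{gauss}). Next, by Eq.\,(\ref{sub-sympl}) with $k=4$, $q=2$, the number of totally-isotropic PG$(4,2)$s in $\mathcal{W}(2N-1,2)$ is $\qbincoeff{N}{5}{2}\prod_{i=1}^{5}(2^{N+1-i}+1)$. This gives the first two terms of the bracket.

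For the third term, I would count pairs (totally-isotropic PG$(3,2)$, non-totally-isotropic PG$(4,2)$ containing it). The number of totally-isotropic PG$(3,2)$s is, by Eq.\,(\ref{sub-sympl}) with $k=3$, equal to $\qbincoeff{N}{4}{2}\prod_{i=1}^{4}(2^{N+1-i}+1)$. Through a fixed totally-isotropic PG$(3,2)$, Eq.\,(\ref{subvsub-pr}) gives $\qbincoeff{2N-4}{1}{2}=2^{2N-4}-1$ PG$(4,2)$s in total, while Eq.\,(\ref{subvsub-sympl}) with $k=4$, $m=3$ gives $\qbincoeff{N-4}{1}{2}(2^{N-4}+1)=2^{2(N-4)}-1$ of them totally isotropic, leaving $2^{2N-4}-2^{2N-8}=15\cdot 2^{2N-8}$ which are isotropic but not totally isotropic. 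Each such PG$(4,2)$ is counted three times in the pair count, by Lemma \ref{3via2}, so the number of such PG$(4,2)$s equals $\qbincoeff{N}{4}{2}\prod_{i=1}^{4}(2^{N+1-i}+1)\cdot 15\cdot 2^{2N-8}/3$. Subtracting this from the difference of the first two counts and multiplying by 16 yields Eq.\,(\ref{qN}).

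The main obstacle, as in the linear case, is making sure that the third term correctly counts only the genuinely mixed PG$(4,2)$s (containing some totally-isotropic PG$(3,2)$ without being totally isotropic themselves) and that the division by 3 is justified by Lemma \ref{3via2}; this requires invoking Remark \ref{rm72} to guarantee that the lemma applies in the ambient PG$(2N-1,2)$ for all $N\geq 3$, not only for $N=4$. Once this is checked, the algebra reduces to the same style of Gaussian-coefficient arithmetic already exemplified in the linear proof.
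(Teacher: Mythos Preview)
Your proposal is correct and follows essentially the same approach as the paper's proof: both count PG$(4,2)$s free of totally-isotropic PG$(3,2)$s by subtracting from $\qbincoeff{2N}{5}{2}$ the totally-isotropic PG$(4,2)$s and then the mixed ones, the latter via the pair count divided by 3 using Lemma~\ref{3via2}/Remark~\ref{rm72}, and finally multiply by 16 via Remark~\ref{rm16d}. The steps, formulas, and justifications match the paper's argument line by line.
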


\begin{proof}
A quadratic doily of $\mathcal{W}(2N-1,2)$ spans a particular PG$(4,2)$ of the
ambient PG$(2N-1,2)$ that does not contain any totally-isotropic PG$(3,2)$. And
since any such PG$(4,2)$ is spanned by (see Remark \ref{rm16d}) 16 such doilies
that are all unique to this space, the number of quadratic doilies of 
$\mathcal{W}(2N-1,2)$ is thus equal to 16 times the number of PG$(4,2)$s that are
devoid of totally-isotropic PG$(3,2)$s. To find the latter number, we again
start with Eq.\,(\ref{gauss}) that tells us that there are altogether
\begin{equation}
\qbincoeff{2N}{5}{2}
\label{qan}
\end{equation}
PG$(4,2)$s in PG$(2N-1,2)$, out of which 
\begin{equation}
\qbincoeff{N}{5}{2} ~\prod_{i=1}^{5} \left(2^{N+1-i} +1\right)
\label{qtin}
\end{equation}
(Eq.\,(\ref{sub-sympl}) with $k=4$ and $q=2$) are totally isotropic.

To ascertain the cardinality of the remaining isotropic PG$(4,2)$s, we proceed as
follows. We first observe that by Eq.\,(\ref{sub-sympl}) with $k=3$ and $q=2$ 
there are 
\begin{equation}
\qbincoeff{N}{4}{2} ~\prod_{i=1}^{4} \left(2^{N+1-i} +1\right)
\label{qtipn}
\end{equation}
totally-isotropic PG$(3,2)$s 
in PG$(2N-1,2)$. Next, with $k=4$, $m=3$ and $q=2$ in (\ref{subvsub-sympl}) it 
follows that there are 
\begin{equation}
 \qbincoeff{N-4}{1}{2} \left(2^{N-4}+1\right)
= 2^{2(N-4)} - 1
\label{qptipn}
\end{equation}
\noindent totally-isotropic PG$(4,2)$s passing through a totally-isotropic 
PG$(3,2)$. And since the total number of PG$(4,2)$s passing via a PG$(3,2)$ of
PG$(2N-1,2)$ is
\begin{equation}
2^{2N-4} -1
\label{nb5}
\end{equation}
(as stemming from Eq.\,(\ref{subvsub-pr}) for $n=2N$, $k=5$, $l=4$ and $q=2$),
through a totally-isotropic PG$(3,2)$ there pass 
\begin{equation}
2^{2N-4} -1 - \left(2^{2(N-4)} - 1\right)
= 15~\times~2^{2N-8}
\label{qnptipn}
\end{equation} 
PG$(4,2)$s that feature totally-isotropic PG$(3,2)$s apart from those that are 
totally isotropic. Hence, the number of those PG$(4,2)$s of PG$(2N-1,2)$ that
contain totally-isotropic PG$(3,2)$s -- with the exclusion of totally isotropic
ones -- amounts to 
$(\ref{qtipn}) \times (\ref{qnptipn}) / 3$, 
where we also took into account (see Remark \ref{rm72}) that any such PG$(4,2)$
features just three totally-isotropic PG$(3,2)$s. All in all, there are 
\begin{equation}
\qbincoeff{2N}{5}{2}
- \qbincoeff{N}{5}{2} ~\prod_{i=1}^{5} \left(2^{N+1-i} +1\right)
- 15 \qbincoeff{N}{4}{2} 2^{2N-8} \prod_{i=1}^{4} \left(2^{N+1-i} +1\right) /3
\end{equation}
PG$(4,2)$s  in the ambient PG$(2N-1,2)$ that are not endowed with any 
totally-isotropic PG$(3,2)$s, the number of quadratic doilies of 
$\mathcal{W}(2N-1,2)$ being just 16 times this number.
\end{proof}

Employing further the fact that the three lines of a perp-set of a quadratic
doily span a PG$(3,2)$, in particular that PG$(3,2)$ that contains just three
totally-isotropic PG$(2,2)$s, we find the compact formula
\begin{equation}
D_q(N) = \frac{48}{15} 4^{N-3} \Theta_3(N)
\label{dq-simp}
\end{equation}
for the number of quadratic doilies in $\mathcal{W}(2N-1,2)$, 
where
\begin{equation}
\Theta_3(N) = \frac{7}{3} 2^{2N-6}
              \prod_{i=1}^3\frac{2^{N-3+i}-1}{2^i-1}
              \prod_{i=1}^3\left(2^{N+1-i} +1\right)
\label{theta33}
\end{equation}
is the number of those PG$(3,2)$s of the ambient PG$(2N-1,2)$ each of which 
features just three totally-isotropic PG$(2,2)$s.

\begin{table}[h!]
$$
\begin{array}{l|l|l|l}
N & D_l(N) & D_q(N) & D(N)\\
\hline
2 & \numprint{1} & \_ & \numprint{1}
\\
3 & \numprint{336} & \numprint{1008} & \numprint{1344}
\\ 
4 & \numprint{91392} & \numprint{1370880} & \numprint{1462272}
\\
5 & \numprint{23744512} & \numprint{1495904256} & \numprint{1519648768}
\\
6 & \numprint{6100942848} & \numprint{1555740426240} & \numprint{1561841369088}
\\
7 & \numprint{1563272675328} & \numprint{1599227946860544} & \numprint{1600791219535872}
\\
8 & \numprint{400289425260544} & \numprint{1639185196441927680} & \numprint{1639585485867188224}
\\
9 & \numprint{102479956839235584} & \numprint{1678929132897196572672} & \numprint{1679031612854035808256}
\end{array}
$$
\caption{First numbers $D(N)$ (resp. $D_l(N)$, $D_q(N)$) of (resp. linear, 
quadratic) $N$-qubit doilies.\label{nbdoilies-tab}}
\end{table}

Comparing expressions (\ref{lN}) and (\ref{qN}), one gets
\begin{equation}
D_q(N) = (4^{N-2} - 1) D_l(N).
\label{qvsl}
\end{equation}
Consequently, the total number of doilies is
\begin{equation}
  D(N) = 4^{N-2} D_l(N).
\label{dN}
\end{equation}
For $2\leq N \leq 9$ the numbers of $N$-qubit doilies are collected in
Table~\ref{nbdoilies-tab}. Since the quadratic doilies of $\mathcal{W}(2N-1,2)$
span a PG$(4,2)$, it has no geometrical meaning to consider quadratic doilies
for $N = 2$. It can nevertheless be noticed that Eq.~(\ref{qN}) also holds
for $N = 2$, and consistently gives $D_q(2) = 0$.

\section{Generation of all \texorpdfstring{$N$}{N}-qubit doilies}
\label{sec:enum-sec}

An $N$-qubit doily can be represented by an isomorphism $f$, sometimes called a
(\emph{doily}) \emph{labeling}, mapping the points of $W_2$ to distinct points
of $W_N$, preserving commutations and anticommutations, and such that $f(a.b) = 
\pm f(a).f(b)$ for any two commuting points/observables $a$ and $b$ (the dot 
($.$) denotes the matrix product). This section describes an algorithm for the 
enumeration of all $N$-qubit doilies, for any $N \geq 2$, by construction of one 
of their labelings.

Let us start with some definitions. An $N$\emph{-qubit ovoid} is a 5-set of
mutually anticommuting $N$-qubit observables whose product is the identity
${\cal I}_N$. A \emph{triad} is a 3-set of mutually anticommuting $N$-qubit
observables. A \emph{center} of a triad is a point commuting with the three
points of the triad. A \emph{unicentric triad} is a triad that has only one
center. Let $\varepsilon$ denote the empty word. The \emph{lexicographic order}
$<$ on words is such that $\varepsilon < u$ for all non-empty word $u$, and $a.u
< b.v$ if and only if either $a < b$, or $a = b$ and $u < v$, for any letters
$a$ and $b$ and words $u$ and $v$.

In order to avoid to consider several times objects that are similar but
differently ordered, we define as follows a total order among letters and words,
and then extend it to all tuples and sets of objects of the same nature, such as
lines, sets of lines, etc. Pauli observables, encoded as words on the alphabet
$\{I,X,Y,Z\}$, are totally ordered by the lexicographic order $<$ induced by the
order on letters, also denoted $<$, such that $I < X < Z < Y$. These orders are
chosen so that their binary counterpart through the encoding $I \rightarrow 00$, $X
\rightarrow 01$, $Z \rightarrow 10$, $Y \rightarrow 11$ is the lexicographic
order on bit vectors (aka. bytes or binary words) induced by the order $0 < 1$
on bits. This order $<$ extends further to tuples $(a_1,
a_2, \ldots, a_n)$ of words, by considering them as words $a_1 a_2 \cdots a_n$
and re-using the former lexicographic order on words.
It also extends to sets of words, by associating canonically to each set the
tuple $(a_1, a_2, \ldots, a_n)$ of its elements written in increasing order
($a_i < a_j$ when $ i < j$), and so on at any level of the hierarchy of objects
of the same nature, such as a point-line geometries, seen as sets of lines, that
are sets of points.

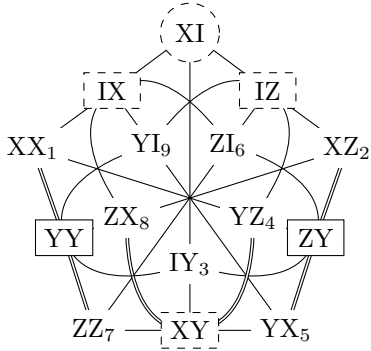
\begin{figure}[!ht]
\begin{center}
\begin{tikzpicture}[every plot/.style={smooth, tension=2},
  scale=2.2,
  every node/.style={fill=white,minimum width=0.75cm}
]
\coordinate (xi) at (0,1.0);
\coordinate (xy) at (0,-0.80);
\coordinate (iy) at (0,-0.40);
\coordinate (xx) at (-0.95,0.30);
\coordinate (zy) at (0.76,-0.24);
\coordinate (yz) at (0.38,-0.12);
\coordinate (zz) at (-0.58,-0.80);
\coordinate (iz) at (0.47,0.65);
\coordinate (zi) at (0.23,0.32);
\coordinate (yx) at (0.58,-0.80);
\coordinate (ix) at (-0.47,0.65);
\coordinate (yi) at (-0.23,0.32);
\coordinate (xz) at (0.95,0.30);
\coordinate (yy) at (-0.76,-0.24);
\coordinate (zx) at (-0.38,-0.12);
\draw (xi) -- (ix) -- (xx);
\draw [style=double] (xx) -- (yy) -- (zz);
\draw [style=double] (yx) -- (zy) -- (xz);
\draw (xz) -- (iz) -- (xi);
\draw (zz) -- (zi) -- (iz);
\draw (ix) -- (yi) -- (yx);
\draw (xz) -- (zx) -- (yy);
\draw (xx) -- (yz) -- (zy);
\draw (xi) -- (iy) -- (xy);
\draw (zz) -- (xy) -- (yx);
\draw plot coordinates{(zx) (ix) (zi)};
\draw plot coordinates{(yi) (iz) (yz)};
\draw plot coordinates{(zi) (zy) (iy)};
\draw [style=double] plot coordinates{(yz) (xy) (zx)};
\draw plot coordinates{(iy) (yy) (yi)};

\node[draw,circle,dashed] at (xi) {XI};
\node[draw,dashed] at (xy) {XY};
\node at (iy) {IY$_3$};
\node at (xx) {XX$_1$};
\node[draw] at (zy) {ZY};
\node at (yz) {YZ$_4$};
\node at (zz) {ZZ$_7$};
\node[draw,dashed] at (iz) {IZ};
\node at (zi) {ZI$_6$};
\node at (yx) {YX$_5$};
\node[draw,dashed] at (ix) {IX};
\node at (yi) {YI$_9$};
\node at (xz) {XZ$_2$};
\node[draw] at (yy) {YY};
\node at (zx) {ZX$_8$};
\end{tikzpicture} \end{center}
\caption{The $2$-qubit doily $W_2$, the ovoid $O_2$ (framed), the triad $T_2$ 
  (framed and dashed), its center $c_2$ (circled and dashed) and the completion
  order (subscripted). The negative lines are doubled.}
\label{co-fig}
\end{figure}

The algorithm relies on the following predefined elements, depicted in
Fig.~\ref{co-fig}:
   the 2-qubit doily $W_2$,
  the ovoid $O_2 \equiv \{IX,IZ,XY,ZY,YY\}$ in $W_2$,
  the unicentric triad $T_2 \equiv \{IX,IZ,XY\}$ in $O_2$,
  the center $c_2 \equiv XI$ of $T_2$, and
  the sequence of lines
\begin{align*}
S \equiv & \ (XI,IX,XX), (XI,IZ,XZ), (XI,XY,IY), (ZY,XX,YZ), (ZY,XZ,YX), (ZY,IY,ZI),
\\
 & \ (YY,XX,ZZ), (YY,XZ,ZX), (YY,IY,YI).
\end{align*} In the figure the third element of the tuples in this
\emph{completion order} is numbered from 1 to 9.

\begin{algorithm}[!ht]
  \begin{center}
    \begin{algorithmic}[1]
      \ForEach{ovoid $O = \{o_1, o_2, o_3, o_4, o_5\}$ in $W_N$, with $o_1 < o_2 < o_3 < o_4 < o_5$}
          \State $f(IX) \gets o_1 \mid\mid  f(IZ) \gets o_2 \mid\mid  f(XY) \gets o_3 \mid\mid  f(ZY) \gets o_4 \mid\mid  f(YY) \gets o_5$\label{lst:line:ovoid}
          \ForEach{center $c$ of $\{o_1, o_2, o_3\}$ in $W_N$ that anticommutes with $o_4$ and $o_5$} \label{lst:line:search:center}
            \State $f(c_2) \gets c$  \label{lst:line:center}
            \IFor{line $(p,q,r)$ in the order of the sequence $S$}{$f(r) \gets |f(p).f(q)|$} \label{lst:line:completion}
            \IIf{$O$ is not the smallest ovoid of $f$} discard $f$ \EndIIf \label{lst:line:discard}
            \State \ldots \hspace{2cm} $\triangleright$ location for a potential treatment of $f$ \label{lst:line:treatment}
          \EndFor
      \EndFor
    \end{algorithmic}
  \end{center}
  \caption{Doily generation algorithm. \label{algo-fig}}
\end{algorithm}

The algorithm itself is presented in Algorithm~\ref{algo-fig}, where $f(a) \gets b$
denotes the assignment of $b$ as the image of $a$ by $f$.

On Line~\ref{lst:line:ovoid} a doily labeling $f$ is partially defined by the
choice of images for the 5 points of the ovoid $O_2$ of $W_2$. These images are
the points of some ovoid $O = \{o_1, o_2, o_3, o_4, o_5\}$ of $N$-qubit
observables.  The points are assigned in increasing order so that to avoid
duplicates.
As these five assignments are independent, they can be performed in parallel.

Then (on Line~\ref{lst:line:search:center}) the algorithm looks for a point $c$
that commutes with the first three points of $O$
 and that anticommutes with its last two points $o_4$ and $o_5$. On
 Line~\ref{lst:line:center} this point becomes the image by $f$ of
the center $c_2$ of the triad $T_2$ of $O_2$.

The completion step on Line~\ref{lst:line:completion} computes one by one the
images of all the other points of $W_2$ by $f$, in the order described by the
sequence of lines $S$. At each iteration of this loop, for the line $(p,q,r)$,
the values $f(p)$ and $f(q)$ are known. By definition of a doily line, the image
by $f$ of the third point $r$ is the product of the images $f(p)$ and $f(q)$ of
the first two points, up to a possible minus sign, removed by the operation
$|\_|$ that denotes absolute value.

Knowing that each doily features 6 ovoids, the same doily is generated 6 times
before Line~\ref{lst:line:discard}, whose statement keeps only one of them,
namely the doily $d$ generated from the ovoid that is the smallest (according to
the lexicographic order) among the 6 ovoids in $d$.

On Line~\ref{lst:line:treatment} various treatments of the generated doilies $f$
can be added, such as a storage, or the computation of classification criteria
 defined in Section~\ref{classify-sec}.

\subsection{Justification of the generation algorithm}
\label{justif-sec}

First of all, the fact that doily labelings encode multi-qubit doilies is a
direct consequence of the definition of a multi-qubit doily. Then, the
properties of correctness and completeness for the doily enumeration algorithm
mainly come from the following definition and proposition, whose proof is
illustrated by Figure~\ref{fig:doily-completion-lines}.

\begin{figure}[ht!]
\begin{center}
\begin{tikzpicture}[every plot/.style={smooth, tension=2},
  scale=4.4,
]
\coordinate (xi) at (0,1.0);
\coordinate (xy) at (0,-0.80);
\coordinate (iy) at (0,-0.40);
\coordinate (xx) at (-0.95,0.30);
\coordinate (zy) at (0.76,-0.24);
\coordinate (yz) at (0.38,-0.12);
\coordinate (zz) at (-0.58,-0.80);
\coordinate (iz) at (0.47,0.65);
\coordinate (zi) at (0.23,0.32);
\coordinate (yx) at (0.58,-0.80);
\coordinate (ix) at (-0.47,0.65);
\coordinate (yi) at (-0.23,0.32);
\coordinate (xz) at (0.95,0.30);
\coordinate (yy) at (-0.76,-0.24);
\coordinate (zx) at (-0.38,-0.12);
\draw [ultra thick] (xi) -- (ix) -- (xx);
\draw [ultra thick] (xx) -- (yy) -- (zz);
\draw [ultra thick] (yx) -- (zy) -- (xz);
\draw [ultra thick] (xz) -- (iz) -- (xi);
\draw (zz) -- (zi) -- (iz);
\draw (ix) -- (yi) -- (yx);
\draw [ultra thick] (xz) -- (zx) -- (yy);
\draw [ultra thick] (xx) -- (yz) -- (zy);
\draw [ultra thick] (xi) -- (iy) -- (xy);
\draw (zz) -- (xy) -- (yx);
\draw plot coordinates{(zx) (ix) (zi)};
\draw plot coordinates{(yi) (iz) (yz)};
\draw [ultra thick] plot coordinates{(zi) (zy) (iy)};
\draw plot coordinates{(yz) (xy) (zx)};
\draw [ultra thick] plot coordinates{(iy) (yy) (yi)};

\node[draw,circle,dashed] at (xi) {$c$};
\node[draw,dashed] at (xy) {$o_3$};
\node at (iy) {$\pm c.o_3$};
\node at (xx) {$\pm c.o_1$};
\node[draw] at (zy) {$o_4$};
\node at (yz) {$\pm (c.o_1).o_4$};
\node at (zz) {$\pm (c.o_1).o_5$};
\node[draw,dashed] at (iz) {$o_2$};
\node at (zi) {$\pm (c.o_3).o_4$};
\node at (yx) {$\pm (c.o_2).o_4$};
\node[draw,dashed] at (ix) {$o_1$};
\node at (yi) {$\pm (c.o_3).o_5$};
\node at (xz) {$\pm c.o_2$};
\node[draw] at (yy) {$o_5$};
\node at (zx) {$\pm (c.o_2).o_5$};
\end{tikzpicture} \end{center}
\caption{Expression of each observable according to the completion order followed by the generation 
algorithm, from the doily root $(\{o_1, o_2, o_3, o_4, o_5\},c)$. The thick lines are the lines used
 to compute these expressions. \label{fig:doily-completion-lines}}
\end{figure}
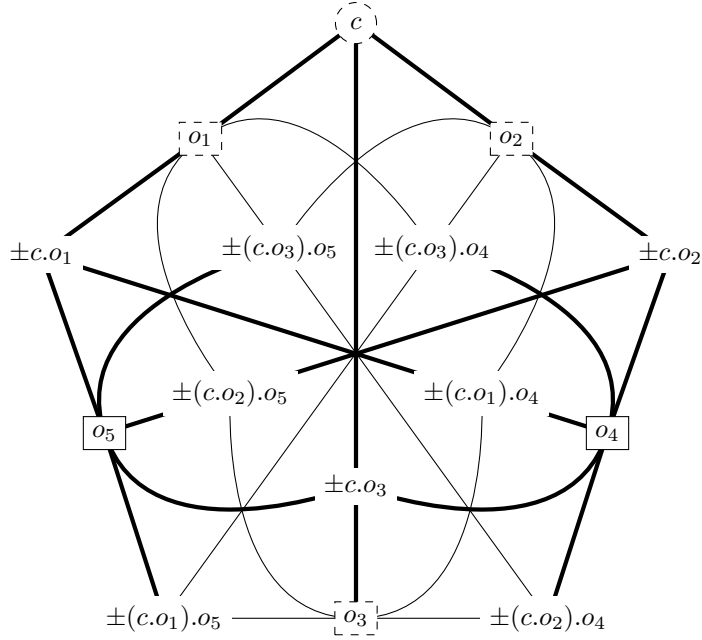

\begin{definition}[Doily Root]
\label{root-def}
Any pair $(O,c)$ such that $O$ is an ovoid of $W_N$ and $c$ is a point of $W_N$ that
commutes with exactly three points of $O$ and anticommutes with the other two
points is called a \emph{(multi-qubit) doily root}.
\end{definition}

\begin{proposition}
\label{doily6deter}
Any doily root $(O,c)$ of $W_N$ determines exactly one $N$-qubit doily.
\end{proposition}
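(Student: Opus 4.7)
The plan is to establish both existence and uniqueness by mimicking, in $W_N$, the construction performed by Algorithm~\ref{algo-fig} in $W_2$. Given a doily root $(O,c)$ with $O=\{o_1,o_2,o_3,o_4,o_5\}$, where $c$ commutes with $o_1,o_2,o_3$ and anticommutes with $o_4,o_5$, I would define a candidate labeling $f\colon W_2 \to W_N$ by first sending $IX,IZ,XY,ZY,YY,XI$ to $o_1,o_2,o_3,o_4,o_5,c$ respectively, and then iteratively setting $f(r):=|f(p).f(q)|$ for each triple $(p,q,r)$ of the completion sequence $S$. The resulting closed-form expressions for all fifteen images, as iterated products of $c$ and the $o_i$'s, are those displayed in Figure~\ref{fig:doily-completion-lines}.

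For existence, the key step is to verify that along $S$ the two operands $f(p)$ and $f(q)$ always commute in $W_N$, so that $f(r)$ is a well-defined observable and $\{f(p),f(q),f(r)\}$ is a line of $W_N$. This follows by induction from two elementary rules: first, the ovoid elements $o_1,\ldots,o_5$ mutually anticommute and the root conditions fix the commutation between $c$ and each $o_i$; second, for any two commuting observables $a,b$, the product $a.b$ commutes with a third observable $x$ iff $x$ has the same commutation status with both $a$ and $b$. The six lines of $W_2$ not listed in $S$ are then handled by the same local commutation rules, together with the word identity $f(p).f(q)=\pm f(r)$ inherited from the analogous identity in $W_2$. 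Distinctness of the fifteen images follows because any coincidence $f(p)=f(q)$ would force a linear relation among $o_1,\ldots,o_4,c$ in $V_N$ that contradicts the corresponding non-relation already present in $V_2$.

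For uniqueness, let $D$ be any $N$-qubit doily of $W_N$ containing $O$ as an ovoid with $c$ as the center of the triad $\{o_1,o_2,o_3\}$. The three lines of $D$ through $c$ are forced to be $\{c,o_i,|c.o_i|\}$ for $i=1,2,3$, so $D$ agrees with $f$ on these initial points. Walking through $S$, every subsequent point of $D$ is the unique third point of a line of $D$ whose two other endpoints are already determined, and hence equals $f$'s value. Therefore $D$ is exactly the image of $f$, as desired.

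The main obstacle is the existence half, and within it the verification that the six lines of $W_2$ not appearing in $S$ still map to valid lines of $W_N$: although each individual check reduces to the two commutation rules above, organising the fifteen cases without redundancy and confirming in each the correct matrix product $\pm\mathcal{I}_N$ is where the proof demands the most care.
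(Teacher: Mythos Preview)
Your plan is essentially the paper's: extend $f$ along the completion sequence $S$, verify the nine $S$-lines via the commutation rules, then treat the six remaining lines separately, and argue uniqueness from the determinism of the construction. The one place where the paper is sharper is precisely the step you flag as most delicate: instead of appealing to an abstract ``word identity inherited from $W_2$'', the paper observes concretely that on each of the six thin lines the product of the three images, once expanded, contains $c$ exactly twice and every ovoid element exactly once, hence equals $\pm c^{2}\,o_1 o_2 o_3 o_4 o_5 = \pm{\cal I}_N$ by the ovoid relation. This turns the part you call ``where the proof demands the most care'' into a one-line computation. Your uniqueness argument, on the other hand, is more explicit than the paper's, which folds uniqueness into the determinism of the algorithm.

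One caution on your distinctness sketch: the relation $c=o_1+o_2+o_3$ \emph{does} hold in $V_2$ (the two-qubit doily is the full $\mathrm{PG}(3,2)$), so you cannot simply say that any relation in $V_N$ contradicts a non-relation in $V_2$. What actually makes your idea work is that the symplectic (commutation) constraints force any linear dependence among $c,o_1,\ldots,o_4$ in $V_N$ to be one already present in $V_2$, and in $V_2$ the fifteen points are distinct. The paper, for its part, does not address distinctness at all.
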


\begin{proof}
Let $O=\{o_1,o_2,o_3,o_4,o_5\}$ and $c$ be such that $(O,c)$ is a multi-qubit
doily root. The fact that $c$ commutes with $o_1$, $o_2$ and $o_3$ implies that
there exist three multiplicative factors $a_1$, $a_2$ and $a_3$ $\in \{-1,1\}$
such that $\{c,o_i,a_i~c.o_i\}$ are isotropic lines of $W_N$, for $i=1,2,3$.
This is depicted in Figure~\ref{fig:doily-completion-lines} by the 3 points $\pm
c.o_i$. But because $c$ anticommutes with $o_4$ and $o_5$ we also have that the
observables $a_i~c.o_i$ commute with $o_4$ and $o_5$. Therefore there exist
multiplicative factors $a_{ij}\in \{-1,1\}$ such that
$\{a_i~c.o_i,o_j,a_{ij}~(c.o_i).o_j\}$ are isotropic lines, for $i=1,2,3$ and
$j=4,5$. This is depicted in Figure~\ref{fig:doily-completion-lines} by the 6
points of the form $\pm (c.o_i).o_j$. These 9 points are computed by the
completion step of the generation algorithm, in the order $a_1~c.o_1$,
$a_2~c.o_2$, $a_3~c.o_3$, $a_{14}~(c.o_1).o_4$, $a_{24}~(c.o_2).o_4$,
$a_{34}~(c.o_3).o_4$, $a_{15}~(c.o_1).o_5$, $a_{25}~(c.o_2).o_5$,
$a_{35}~(c.o_3).o_5$. The 9 geometric lines thus identified are depicted by
thick lines in Figure~\ref{fig:doily-completion-lines}. Finally, it is easy to
check that the six 3-sets represented by the thin lines in
Figure~\ref{fig:doily-completion-lines} indeed are geometric lines.
A noticeable property is that the product of the three points on each of these
lines contains twice the center $c$ and once each point of the ovoid $O$. By
applying the known commutation and anticommutation relations between these
points, it comes that the product of both centers annihilates. So, modulo a
possible minus sign, it remains the product of all observables of the ovoid,
known to equal identity. Therefore, the product of the three observables on each
line equals $\pm{\cal I}_N$. Consequently, these 15 points and 15 lines form a
doily, shown in Figure~\ref{fig:doily-completion-lines}, so the algorithm is
correct.

Each multi-qubit doily features at least one ovoid and the first loop explores
all ovoids in $W_N$. So, each doily is found six times before Line~\ref{lst:line:discard},
 since each multi-qubit doily features exactly six ovoids. As the statement
on this line always keeps one of them (the one that has been produced from the
smallest of its ovoids), the algorithm is also complete.\end{proof}

\subsection{Algorithmic complexity and implementation details}

The enumeration algorithm explores all 4-tuples of observables likely to form an
ovoid (the fifth point in the ovoid is computed as the product of the previous
four), and then explores all observables to find $c$ (on
Line~\ref{lst:line:search:center} of Algorithm~\ref{algo-fig}).
Therefore, the complexity of the algorithm is estimated to be
$O\left(4^{5N}\right)$, when the time unit is the duration to check whether two
observables commute.

For efficiency reasons, we have implemented the algorithm in the C language,
which allows for many optimizations.
The total code is composed of about \numprint{2300} lines and 50 functions, some
of which implementing the classification process presented in
Section~\ref{classify-sec}. Some factors make the algorithm implementation more
efficient than the former one presented in~\cite{SdHG21}: The new algorithm has
a lower complexity; compared to the previously used language
Magma~\cite{BCP97}, the low-level language C allows to perform fast
operations on bit vector representations of the observables, using as few CPU
instructions as necessary, and to split the workload into multiple
threads.

The calculations were run on Linux Ubuntu, on a PC equipped with an Intel
(R) Core(TM) i7-8665U 1.90~GHz and 15~GB RAM. The code was compiled with gcc
9.3.0 with optimization Ofast and is multi-threaded with OpenMP.

\section{Multi-doily classification process and results}
\label{classify-sec}

This section presents our classification criteria of $N$-qubit doilies
 and the classification results for $N = 4$ and $N = 5$.

\subsection{Classification criteria}

The classification parameters adopted are the same as in~\cite{SdHG21}. The
classification of an $N$-qubit doily is based on: 1) its signature, i.\,e. the
number of its observables containing a given number of $I$ : $N-1$, $N-2$,
$N-3$, \ldots respectively named types $A$, $B$, $C$, \ldots; 2) the
configuration of its negative lines, as described in Section~\ref{basic-sec};
and 3) its linear or quadratic character.

To find the line configuration of a doily, the first discriminatory factor is
the number of negative lines, since for each number of negative lines except 7
and 8, there is only one configuration possible. Then the property used to
distinguish configurations 7A from 7B and 8A from 8B is to count the number of
observables contained in at least one negative line, since this number is
different between A and B.

We use the following property to check whether a doily is linear or quadratic.
Given an $N$-qubit doily, we pick up in it a tricentric triad (here we take the
image of $\{XY,ZY,YI\}$). If the product of the corresponding three observables
is $\pm i {\cal I}_N$, then the doily is linear, otherwise it is quadratic. This is
because any tricentric triad is a line in the ambient PG$(3,2)$ if a doily spans
a PG$(3,2)$.

\subsection{Database of numerical results}

Using the program described in Section~\ref{sec:enum-sec}, we were able to classify
all doilies for $N=3$ (\numprint{2016} ovoids), $N=4$ (\numprint{548352} ovoids)
and $N=5$ (\numprint{142467072} ovoids). This classification is a treatment
added on Line~\ref{lst:line:treatment} of the algorithm presented in
Algorithm~\ref{algo-fig}, that determines the complete type of each generated
doily, counts the number of doilies for each type, and registers it in a result
table.

The sums of the numbers of linear and quadratic doilies found in each of the
above-mentioned cases correspond exactly to those stemming from eqs.~(\ref{lN})
and~(\ref{qN}), respectively, summarized in Table~\ref{nbdoilies-tab}.
The results of our classification are collected in Appendix~\ref{res3} (three
qubits), Appendix~\ref{res4} (four qubits) and Appendix~\ref{res5} (five
qubits). The data for three qubits are in complete agreement with those
of~\cite{SdHG21}; we found 11 different types of doilies of which five are
linear and six quadratic. The 95 distinct types of four-qubit doilies split into
24 linear and 71 quadratic ones, whereas amongst 447 types of five-qubit doilies
one finds 89 linear and 358 quadratic.

The structure of the classification table in each appendix is the same: the
first column gives the type, the next $N$ columns feature the numbers of
observables of the corresponding types in a doily of the given type, the $\nu$
column shows the doily's character, and the remaining columns contain
information about how many doilies of the given type are endowed with a
particular number of negative lines (the blank space stands for zero here).
The types are ordered in decreasing order of the number of observables
containing no $I$s, in case of equality in decreasing order of the number of
observables containing one $I$, and so on up to the number $A$ of observables
containing $N-1$ $I$s. For instance, for 4 qubits, the type 1 contains the
maximal number 12 of $D$-type observables, and the last type 95 contains only
$A$- and $B$-type observables. For a given signature, the type of quadratic
doilies precedes that of linear ones.

The result tables are stored in \url{https://quantcert.github.io/}.

The C code for classification runs in 0.3~s
 for 4 qubits with 1.4 MB of memory and 12~min with 1.8 MB of memory for 5 qubits.
 The memory usage is low because the doilies are 
not stored, all the measurements are performed on the fly.

\subsection{Remarks about five-qubit doilies}

Let us have a closer look at the five-qubit case. The $3^2\times\binom{5}{2}=90$
observables of type $B$ and $3^4\times\binom{5}{4}=405$ observables of type $D$
lie on an elliptic quadric $\mathcal{Q}^-_{(YYYYY)}(9,2)$ of $\mathcal{W}(9,2)$. 
This special quadric $\mathcal{Q}^-_{(YYYYY)}(9,2)$, like any non-degenerate
quadric, is a \emph{geometric hyperplane} of $\mathcal{W}(9,2)$. As a doily is
also a \emph{subgeometry} of $\mathcal{W}(9,2)$, it either lies fully in $
\mathcal{Q}^{-}_{(YYYYY)}(9,2)$ (in which case $B \cup D = 15$, such a doily will
be called special), or shares with $\mathcal{Q}^{-}_{(YYYYY)}(9,2)$ a set of
points that form a geometric hyperplane, in particular an ovoid ($B\cup D = 5$),
a perp-set ($B \cup D = 7$) and/or a grid ($B \cup D = 9$) and being referred to
as ovoidal, perpial and/or gridal, respectively.

From Appendix~\ref{res5} one can infer a number of interesting properties. We
first notice that signatures with $B \cup C$ being even or odd are endowed with
even or odd numbers of negative lines, respectively.

We also observe that there are 12 different signatures with $A=C=E=0$, i.\,e.,
signatures featuring solely special doilies. 

Further, there are 17 particular signatures such that each features observables
of every type and no two types have the same cardinality.
Out of them, six are ovoidal (e.\,g., 2-1-3-4-5), seven perpial (e.\,g.,
1-3-2-4-5) and four gridal (e.\,g., 2-4-3-5-1).

If all doilies of a particular signature have just five or just six negative
lines, then each doily is ovoidal; if a signature features just seven negative
lines, then all of its doilies are perpial.

Among 33 distinct signatures with four negative lines only, one finds 12 ovoidal,
11 perpial and 9 gridal ones; doilies of the remaining signature, viz.
0-8-0-7-0, are special.

Next, there are 15 different signatures whose doilies are endowed  with 12
(i.\,e., the maximum number of) negative lines. Out of them, five are ovoidal,
five perpial and four gridal; the doilies of the remaining signature, namely
0-0-0-15-0, are special. Similarly, there are 35 distinct signatures whose
doilies contain 11 (i.\,e., the maximum odd number of) negative lines; out of
them, 10 are ovoidal, 11 perpial and 12 gridal, with the remaining two
signatures, viz. 0-1-0-14-0 and 0-3-0-12-0, featuring solely special doilies.

\subsection{Specific behavior of linear doilies}

Finally, this section and the next one briefly mention some properties of linear
doilies. Like the three- and four-qubit cases, a linear five-qubit doily can be
either ovoidal or gridal and always contains an odd number of negative lines.
Also, 75 types of linear doilies share their signatures with their quadratic siblings.
However, there are 14 different signatures that are genuinely linear, of which
eight cases are ovoidal.

From our results on three-, four- and five-qubit cases it follows that a linear
doily (a) always features an odd number of negative lines, and (b) does not
share a perp-set with the distinguished quadric.

We conjecture that Property (a) holds for any number of qubits $N \geq 2$, but we
have not yet found of proof of it; we surmise that it has something to do with
the fact that a linear doily is ``squeezed'' into a PG$(3,2)$, compared to a
quadratic doily that enjoys more degrees of freedom being stretched out in a 
PG$(4,2)$. Property (b) can readily be proved to hold for any  $N \geq 3$, as follows. 

\begin{proof} 
Let us consider a linear doily with one of its perp-set; on 
Figure~\ref{doily-uni-triad}, this perp-set is illustrated in bold font. Any
perp-set of any doily features four tricentric triads; in our perp-set these
triads are: $\{1,2,3\}$, $\{3,4,5\}$, $\{1,5,6\}$ and $\{2,6,4\}$. Now, we know
that any tricentric triad of a linear doily corresponds to a non-isotropic line
in the ambient projective space, the four lines plus the three (totally
isotropic) lines of the perp-set forming a Fano plane in this space, which is
illustrated in Figure~\ref{fano-plane}. The assumption that our perp-set also
lies on the distinguished quadric would mean that the whole plane would lie in
the distinguished quadric and so would be totally isotropic, a contradiction.
\end{proof}

\begin{figure}[hbt!]
\begin{subfigure}{0.5\textwidth}
\begin{center}
\begin{tikzpicture}[every plot/.style={smooth, tension=2},
  scale=1.7,
  every node/.style={scale=0.7,fill=white,circle,draw=black}
]
\coordinate (25) at (0,1.0);
\coordinate (34) at (0,-0.80);
\coordinate (16) at (0,-0.40);
\coordinate (36) at (-0.95,0.30);
\coordinate (24) at (0.76,-0.24);
\coordinate (15) at (0.38,-0.12);
\coordinate (12) at (-0.58,-0.80);
\coordinate (46) at (0.47,0.65);
\coordinate (35) at (0.23,0.32);
\coordinate (56) at (0.58,-0.80);
\coordinate (14) at (-0.47,0.65);
\coordinate (23) at (-0.23,0.32);
\coordinate (13) at (0.95,0.30);
\coordinate (45) at (-0.76,-0.24);
\coordinate (26) at (-0.38,-0.12);
\draw [ultra thick] (25) -- (14) -- (36);
\draw (36) -- (45) -- (12);
\draw (12) -- (34) -- (56);
\draw (56) -- (24) -- (13);
\draw [ultra thick] (13) -- (46) -- (25);
\draw (12) -- (35) -- (46);
\draw (14) -- (23) -- (56);
\draw (13) -- (26) -- (45);
\draw (36) -- (15) -- (24);
\draw [ultra thick] (25) -- (16) -- (34);
\draw plot coordinates{(26) (14) (35)};
\draw plot coordinates{(23) (46) (15)};
\draw plot coordinates{(35) (24) (16)};
\draw plot coordinates{(15) (34) (26)};
\draw plot coordinates{(16) (45) (23)};
\node[double] at (25) {$\mathbf{0}$};
\node at (34) {$\mathbf{6}$};
\node at (16) {$\mathbf{3}$};
\node at (36) {$\mathbf{4}$};
\node at (24) {$\;\;$};
\node at (15) {$\;\;$};
\node at (12) {$\;\;$};
\node at (46) {$\mathbf{2}$};
\node at (35) {$\;\;$};
\node at (56) {$\;\;$};
\node at (14) {$\mathbf{1}$};
\node at (23) {$\;\;$};
\node at (13) {$\mathbf{5}$};
\node at (45) {$\;\;$};
\node at (26) {$\;\;$};
\end{tikzpicture} \end{center}
\caption{}
\label{doily-uni-triad}
\end{subfigure}
\begin{subfigure}{0.5\textwidth} 
\begin{center}
\begin{tikzpicture}[every plot/.style={smooth, tension=2},
  scale=1.2,
  every node/.style={scale=0.7,fill=white,circle,draw=black}
]
\coordinate (0) at (0,-0.38);
\coordinate (1) at (-1,-1);
\coordinate (2) at (1,-1);
\coordinate (3) at (0,-1);
\coordinate (4) at (0.5,0);
\coordinate (5) at (-0.5,0);
\coordinate (6) at (0,1);
\draw [ultra thick] (6) -- (0) -- (3);
\draw (1) -- (5) -- (6);
\draw (6) -- (4) -- (2);
\draw (1) -- (3) -- (2);
\draw [ultra thick] (1) -- (0) -- (4);
\draw [ultra thick] (5) -- (0) -- (2);
\draw (0) circle[radius=0.62];
\node[double] at (0) {$0$};
\node at (1) {$1$};
\node at (2) {$2$};
\node at (3) {$3$};
\node at (4) {$4$};
\node at (5) {$5$};
\node at (6) {$6$};
\end{tikzpicture} \end{center}
\caption{}
\label{fano-plane}
\end{subfigure}
\caption{Graphical arguments for the property that a linear
doily cannot
share a perp-set with the distinguished quadric.}
\label{fig:proof-linearity}
\end{figure}
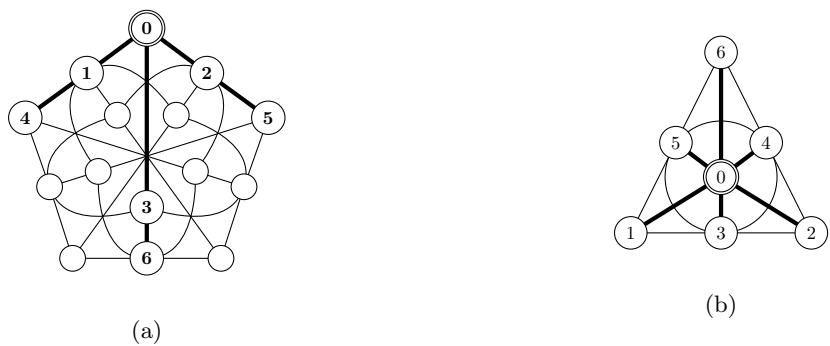

\subsection{A distinguished hexad of (linear) doilies}

One knows that given an ovoid, there is a unique linear doily containing this
ovoid. Now, take any quadratic doily. As each of its six ovoids defines a unique
linear doily, we have a unique hexad of doilies tied to each quadratic doily.
This holds for any $N \geq 3$. Figure~\ref{fig:hexad-doilies} illustrates this
property for $N=3$. It features a quadratic doily in the middle, its six ovoids
depicted explicitly as pentads of points located on bold gray lines and the
corresponding six linear doilies; for better readability, the
 points of the corresponding ovoids are illustrated by double-circles.

\begin{figure}[!ht]
  \begin{center}
    \includegraphics[scale=0.9]{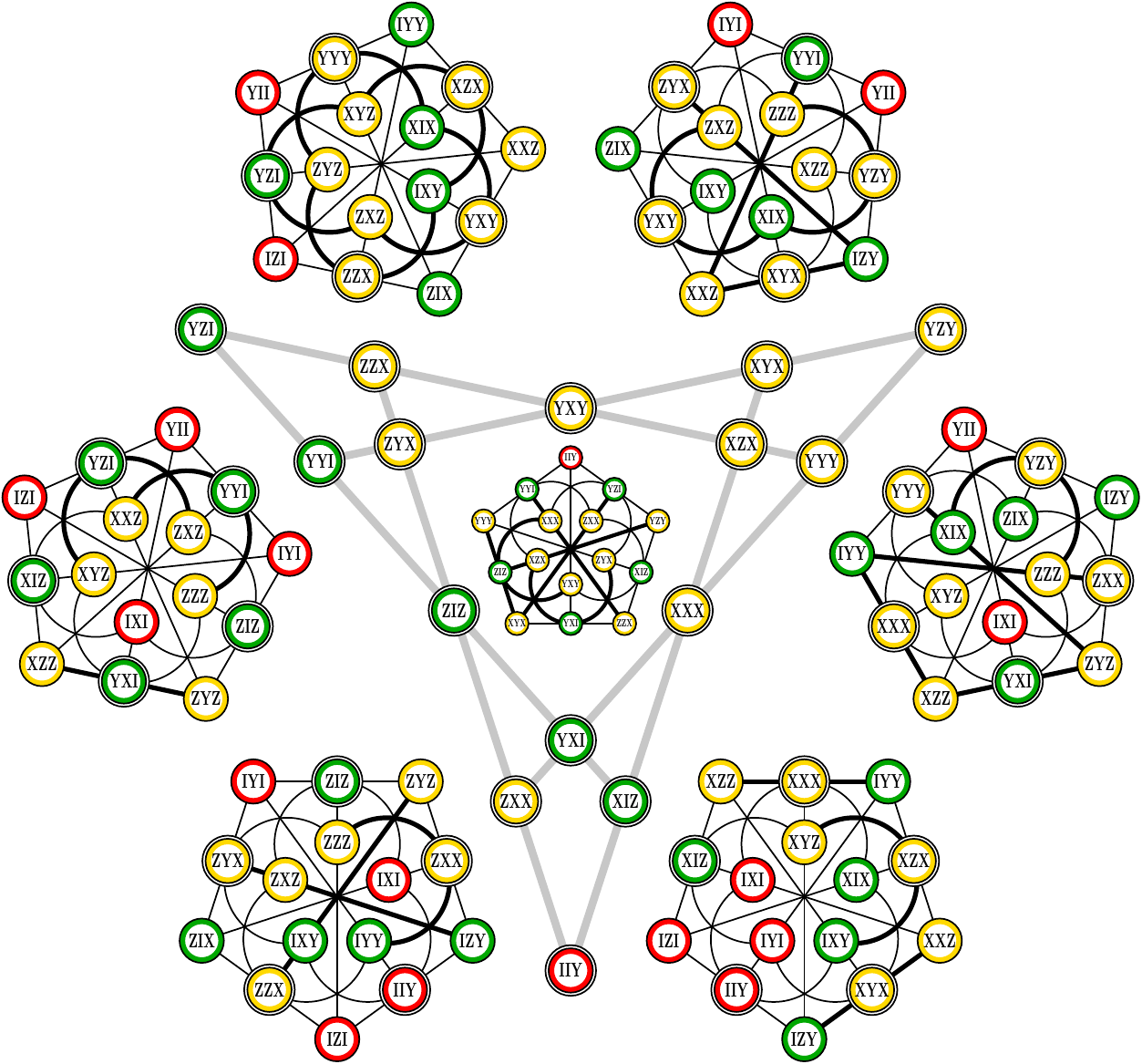}
    \caption{A particular hexad of linear doilies in the three-qubit symplectic 
      polar space.}
    \label{fig:hexad-doilies}
  \end{center}
\end{figure}

\section{Conclusion}
\label{conclusion-sec}

There are a number of intriguing extensions and generalizations of the ideas and
findings presented in this paper. We shall mention a few of them.

An interesting situation that will be worth addressing occurs in the case of $N=4$.
Given a PG$(3,2)$ of the ambient PG$(7,2)$ of $\mathcal{W}(7,2)$, its polar space 
is another PG$(3,2)$. Hence, PG$(3,2)$s in PG$(7,2)$ come in polar pairs. Taking 
into account the fact that a non-isotropic PG$(3,2)$ features a unique linear doily 
of $\mathcal{W}(7,2)$, the above property means that also linear doilies 
of $\mathcal{W}(7,2)$ occur in pairs. That is, picking up any linear four-qubit doily,
there exists a unique linear doily such that each of its 15 observables commutes
with each observable of the selected doily. 
This observation raises several interesting questions. For example, it would be interesting
to ascertain which signatures are/can be paired, or which cardinalities of negative
lines can occur in such self-polar pairs; we have already checked by hand a few examples 
where both doilies in a pair have the same signature and feature the same number of negative lines.
There are (see Appendix B) altogether 24 different signatures featured
by linear four-qubit doilies. We can then create a graph on 24 vertices such that
its two edges are connected if there exists a pair of linear doilies
exhibiting the corresponding signatures; we can even add a weight to
an edge showing how many pairs of doilies feature this particular pair
of signatures. This graph, as it follows from the examples checked, will also
have edges joining a vertex to itself when the two paired signatures are identical.
So, being an interesting graph of its own, it will also reveal
some finer traits of the relation between individual linear doilies in $\mathcal{W}(7,2)$!

A particular case deserving closer attention is $N=6$. Here, let us formally view any six-qubit
observable as a `syntheme' partitioned into three two-qubit observables (`duads'). Given a partition,
we find a set of linear doilies such that any doily in the set features 15 particular observables 
such that when restricted to the same duad we get a two-qubit doily; that is, any such doily can formally
be regarded as being composed of three two-qubit doilies. Moreover, each partition features a prominent
doily having all the three duads identical. The next worth-exploring case in this respect is $N=9$, 
as $\mathcal{W}(17,2)$ hosts
not only composites comprising three doilies having the same number of qubits (namely three), but also those
whose compounds feature different numbers of qubits (namely four, three and two).

Another prospective, but much more challenging, task will be to count and classify all rank-three spaces, 
$\mathcal{W}(5,2)$s, living in a particular $W_N$, for $N \geq 4$. The case $N = 4$ was already briefly examined
in \cite{SdHG21}. To address higher rank cases, we plan to employ the strategy that is the direct and natural generalization of the
ovoid-based algorithm for doilies described in this paper. Geometrically, an $N$-qubit ovoid is a set of five points lying on a certain elliptic 
quadric of a PG$(3,2)$ in the ambient PG$(2N-1,2)$. Hence, its analogue will be a set of 27 $N$-qubit observables lying
on an elliptic quadric of PG$(5,2)$ in the ambient PG$(2N-1,2)$, and a triad of the ovoid will have its counterpart in
a quadratic doily located on the quadric. A root of an $N$-qubit $\mathcal{W}(5,2)$ will thus comprise an elliptic quadric and an off-quadric point
such that its associated observable commutes with each of the 15 observables of a doily located in the quadric and anticommutes
with the remaining 12 observables. It is obvious that this task will require a more elaborate generation algorithm and a more complex computer code
to be successfully accomplished.

\section*{Acknowledgments}
\label{sec:acknowledgments}

This project is supported by the EIPHI Graduate School (contract
ANR-17-EURE-0002). This work was also supported by the Slovak VEGA Grant Agency,
Project \# 2/0004/20. We thank our friend Zsolt Szab\'o for the help in 
preparation of several figures. 

\bibliographystyle{unsrt}

\newpage

\appendix

\section{Taxonomy of 3-qubit doilies}
\label{res3}

\vspace*{0.5cm}
{

}}

\end{document}